\newtheorem{lemma}{Lemma}
\newtheorem{remark}{Remark}
\DeclareMathOperator*{\argmax}{argmax}
\begin{document}
%
\title{WiDRa - Enabling Millimeter-Level Differential Ranging Accuracy in Wi-Fi Using Carrier Phase}
\author{Vishnu V. Ratnam, \IEEEmembership{Senior Member,~IEEE}, Bilal Sadiq, \IEEEmembership{Member,~IEEE}, Hao Chen, \IEEEmembership{Member,~IEEE}, \\ Wei Sun, \IEEEmembership{Member,~IEEE}, Shunyao Wu, \IEEEmembership{Member,~IEEE}, Boon L. Ng, \IEEEmembership{Member,~IEEE}, \\ Jianzhong (Charlie) Zhang, \IEEEmembership{Fellow,~IEEE}
\thanks{All authors are with the Standards and Mobility Innovation Lab, Samsung Research America,
Plano, Texas, USA. (e-mail: ratnamvishnuvardhan@gmail.com).}
}

\maketitle

\begin{abstract}
Although Wi-Fi is an ideal technology for many ranging applications, the performance of current methods is limited by the system bandwidth, leading to low accuracy of $\sim 1$ m. For many applications, measuring differential range, viz., the change in the range between adjacent measurements, is sufficient. Correspondingly, this work proposes WiDRa - a Wi-Fi based Differential Ranging solution that provides differential range estimates by using the sum-carrier-phase information. The proposed method is not limited by system bandwidth and can track range changes even smaller than the carrier wavelength. The proposed method is first theoretically justified, while taking into consideration the various hardware impairments affecting Wi-Fi chips. In the process, methods to isolate the sum-carrier phase from the hardware impairments are proposed. Extensive simulation results show that WiDRa can achieve a differential range estimation root-mean-square-error (RMSE) of $\approx 1$ mm in channels with a Rician-factor $\geq 7$ (a $100 \times$ improvement to existing methods). The proposed methods are also validated on off-the-shelf Wi-Fi hardware to demonstrate feasibility, where they achieve an RMSE of $< 1$ mm in the differential range. Finally, limitations of current investigation and future directions of exploration are suggested, to further tap into the potential of WiDRa. 
\end{abstract}

\begin{IEEEkeywords}
Wi-Fi, Wi-Fi ranging, Differential ranging, Localization, Carrier phase, Smart home, CFO estimation.
\end{IEEEkeywords}

\section{Introduction} \label{sec_intro}
\IEEEPARstart{W}{ith} the growth in wireless infrastructure and personal wireless devices, the demand for wireless ranging has mushroomed over the last decade. Wireless ranging involves finding the distance (also called range) between two wireless devices and it is a key step of most positioning/localization solutions \cite{Zafari2019}, and is also useful in other applications like proximity detection, wireless sensing, etc. The use cases are plentiful, including smart homes and buildings, context awareness, surveillance, disaster management, industry, health-care, etc. Correspondingly several different ranging techniques have been proposed that utilize Ultra-wide band (UWB) \cite{IEEE_154z}, Lidar \cite{Royo2019}, Global Navigation Satellite System (GNSS) \cite{Misra2011}, Ultrasound \cite{Gabbrielli2023}, Bluetooth \cite{Zand2019, Giovanni2021}, Zigbee \cite{IEEE_154} and Wi-Fi technologies \cite{IEEEWiFi2020, IEEE_11az}. Due to its low cost, wide indoor coverage and ubiquitous availability, Wi-Fi-based ranging is the ideal solution for many use cases and is the focus of this paper. 

Due to its attractiveness for many use cases, there is a rich literature of prior work on Wi-Fi-based ranging \cite{LiuFen2020}, an overview of which can be found later in this section. With the current Wi-Fi standardization efforts in IEEE 802.11mc \cite{IEEEWiFi2020} and 802.11az \cite{IEEE_11az}, round-trip-time (RTT) based ranging is the most popular approach for range estimation with Wi-Fi. In this approach, an initiator station (STA) and responder STA periodically perform a round-trip exchange of frames, as illustrated in Fig.~\ref{Fig_FTM_illustrate}. For the $p$-th frame exchange, the transmission times ($t_p^{(1)}, t_p^{(3)}$) and reception times ($t_p^{(2)}, t_p^{(4)}$) of the two frames are measured at the corresponding transmitting and receiving devices. With the exchange of these measurement times, the RTT and, correspondingly, the \emph{absolute} range between the initiator and responder can be estimated. For several applications such has gesture recognition, pose estimation, velocity estimation, direction finding, etc. \cite{Mollyn2023}, knowing the absolute range may not be as important as tracking the \emph{differential} range and \emph{relative} range. Here, differential range is the change in the absolute range of a device between two adjacent measurements, while relative range is its integration, i.e., the change in the absolute range of a device over a time window of interest. The absolute, differential and relative range are illustrated pictorially in Fig.~\ref{Fig_dRange_illustrate}. With such applications in mind, in this work we shall focus on the estimation of differential range.
\begin{figure}[!htb]
\centering
\includegraphics[width= 0.48\textwidth]{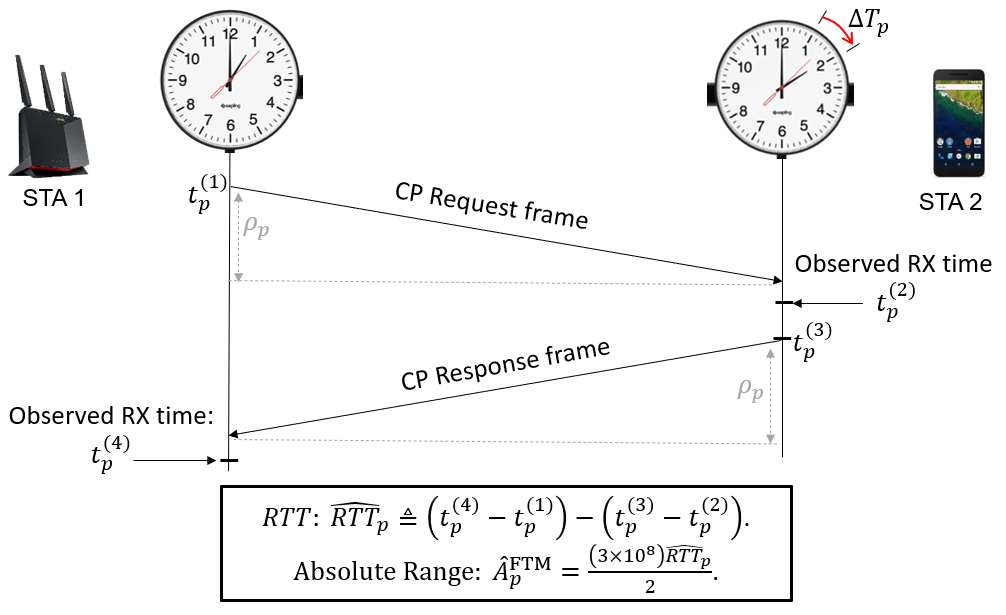}
\caption{An illustration of the $p$-th frame exchange between two Wi-Fi STAs for RTT-based range estimation. In EDCA Fine Time Measurement (FTM) protocol, the Request frame is an FTM frame and the Response frame is an ACK frame \cite{IEEE_11az}. In trigger-based and non-trigger-based FTM protocol, the Request frame is an I2R NDP frame and the Response frame is an R2I NDP frame \cite{IEEE_11az}. Here the clocks represent the fact that the time reference can be different at the two STAs.}
\label{Fig_FTM_illustrate}
\end{figure}
\begin{figure}[!htb]
\centering
\includegraphics[width= 0.48\textwidth]{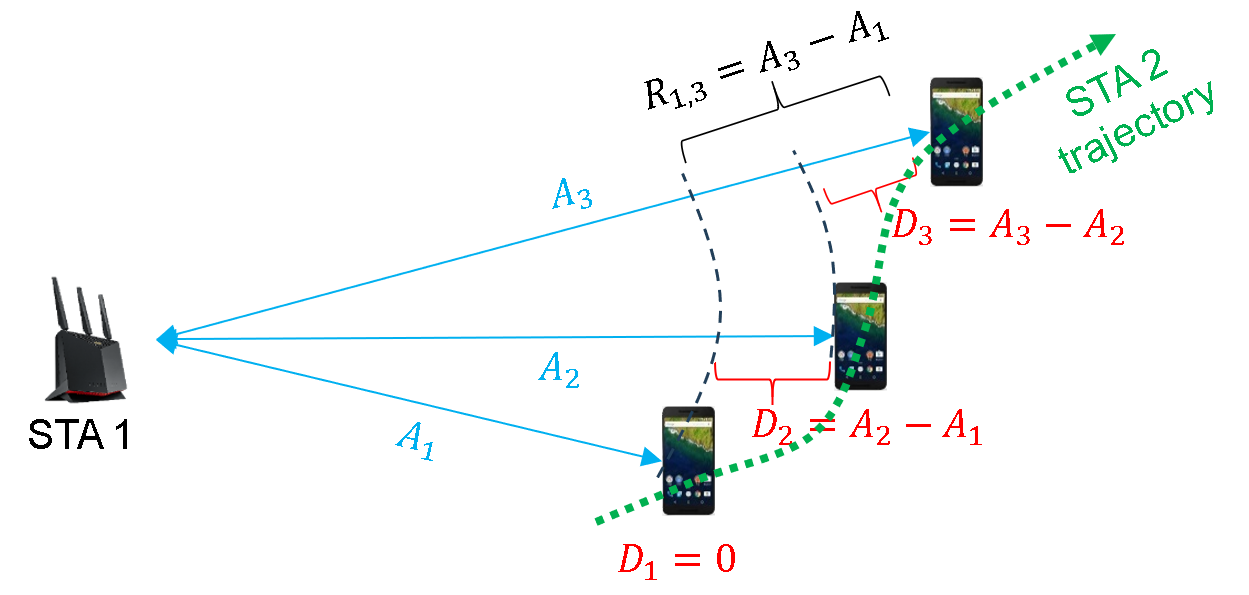}
\caption{An illustration of absolute range $A_p$, differential range $D_p \triangleq A_p-A_{p-1}$ and relative range $R_{q,p} \triangleq A_p-A_{q}$ frame exchanges $p \in \{1,2,3\}$ between two Wi-Fi STAs.}
\label{Fig_dRange_illustrate}
\end{figure}

One major limitation of conventional RTT-based range estimation is that the aforementioned time measurements only utilize the baseband signal information of the exchanged frames. This fundamentally limits the accuracy of conventional absolute/differential/relative range estimation by the system bandwidth (BW), as also justified by the baseband Cramer-Rao lower bound \cite{Thorbjornsen_2010}. 
Real-world measurements with the Fine Time Measurement (FTM) protocol have shown a ranging accuracy of only $\sim 1$ m even with an $80$ MHz system BW \cite{Ibrahim2018, Ma2022}, which may be unsuitable for several applications. However unlike absolute range, as shall be shown in this work, the differential range information can also be extracted from the passband signal information of the frames exchanged during the RTT ranging. Furthermore, this differential range estimate is not fundamentally limited by the system BW. Thus, by exploiting this passband information (which is conventionally untapped) a differential range estimate can be obtained that has a much higher precision than conventional estimates that use baseband information. 

In this work we propose WiDRa - a novel Wi-Fi-based Differential Ranging solution that exploits passband signal information and whose accuracy is not limited by the system BW. For this, WiDRa uses the carrier phase (CP) measurements from the frames exchanged within the Wi-Fi FTM protocol. Here CP is the phase of the received signal corresponding to the line-of-sight (LoS) path between the transmitter (TX) and receiver (RX), measured at the carrier frequency. 
The intuition is as follows: For a transmission by a TX, the phase of the received carrier at an RX with respect to its local oscillator depends on the propagation time – equivalently, range – between the TX and RX, the carrier frequency, and on impairments like initial phase offset and carrier frequency offset (CFO) between the TX and RX oscillators. If these impairments are compensated for, the remainder is proportional to the range between TX and RX and has a accuracy that is only limited by the carrier frequency. As shall be shown, our proposed method WiDRa can track the differential range with an accuracy of $< 1$ cm in a strong LoS scenario, even with just a $20$ MHz BW. Since the same frame exchanges yield both WiDRa-based and conventional RTT-based range estimates, they can also be combined to obtain higher precision differential, relative and absolute range estimates. 
In terms of prior work, the use of CP measurements for improving range estimation has been explored in GNSS systems with good success \cite{Remondi1985}. Fundamentally different from Wi-Fi, GNSS is a synchronized one-way ranging system where the measurements from multiple synchronized satellites, and measurements to two or more ground stations are used to deal with ambiguities in the CP measurements. CP-aided ranging has also been extended to other synchronized systems \cite{Yang2017, Dun2020}. 
More recently, the use of CP for improving ranging was also explored for 5G NR \cite{Chen2022}. However, there is no prior work, to the best of our knowledge, on the use of CP-based differential ranging for an unsynchronized system like Wi-Fi. As shall be shown, the lack of synchronization requires several innovations, such as combining the CP measurements from both the devices into a sum-CP metric and estimation and correction of the CFO. 
The contributions of this paper are as follows:
\begin{itemize}
\item We propose WiDRa - a new Wi-Fi CP-based method for tracking the differential range between a TX and RX that can reuse the frame exchanges from the FTM protocol. 
\item To validate the method, a detailed mathematical model for the Wi-Fi channel state information (CSI) is presented, taking into consideration the different system impairments that can impact CP-measurements.  \item We propose theoretically sound algorithms to correct these CSI impairments, obtain CP measurements from the corrected CSI, and recover the differential range from the CP measurements in LoS channels. In the process we also determine the necessary timing conditions to be satisfied by the frame exchanges. 
\item We test the proposed solutions on simulated data to study the impact of different system parameters on performance of WiDRA. As a proof-of-concept, the method is also successfully tested on a real-world test-bed.
\item We finally discuss the limitations of the current investigation and propose several future directions for work.
\end{itemize}
It should be noted that the focus of this paper is to introduce the idea of CP-based differential ranging, identify necessary conditions on frame transmissions, and show proof-of-concept (via theory, simulations and experiments) that the method works in the presence of real-world hardware impairments. Consequently, here we focus on LoS channels. Further investigations of the method in more challenging environments shall be explored in future work. 

\subsection{Prior Art on Wi-Fi ranging} \label{subsec_prior_art_intro}
\noindent Common approaches for Wi-Fi-based ranging between two devices are summarized below. 
\subsubsection{Model-based ranging} In this approach, a TX transmits a known signal, and an RX observes parameters of the received signal and uses a pre-determined model on how a transmitted signal is affected by the range to infer the range. The measured parameters are typically the Wi-Fi received signal strength indicator (RSSI) values, and the model can either be a pathloss equation \cite{Paul2009} or can be a data-driven model which is the case for finger-printing \cite{ Xia2017, Singh2021}. More recently, the use of CSI from one or more RX antennas, which provides much more information than RSSI, has also been explored as parameters have also been explored \cite{Yang2013, Rocamora2020, Kotaru2015}. This was historically the most common approach for ranging with Wi-Fi due to simplicity of the required parameters to be measured and lack of alternatives. 
However, these methods do not generalize well and require a significant overhead for calibrating the model to a specific deployment scenario, thus making them unappealing. 
\subsubsection{Phase difference of arrival-based ranging} In this approach, a TX transmits a narrow-band sinusoidal signal sequentially at multiple frequencies and the RX compares the relative phases of the received sinusoids to estimate range. This is a popular approach for ranging with synchronized narrow-band systems that are naturally designed for frequency hopping, such as Bluetooth \cite{Zand2019, Sheikh2023}. Some works have also extended its use to Wi-Fi systems by performing channel hopping \cite{Vasisht2016}. However, this method is not well suited for current Wi-Fi systems, since they are not designed to perform fast channel switching, the switching interrupts data transmission, and since the method doesn't exploit the wider BW available with Wi-Fi systems. 
With the introduction of multi-link operation in Wi-Fi 7 \cite{Evgeny2020}, there is some scope of exploitation of such methods without the need to switch channels. 
\subsubsection{Round trip time (RTT)-based ranging} In this approach, a TX and an RX complete a two-way frame exchange of frames, and use the time stamps of transmission and reception of both the forward and the backward  frames to estimate the RTT and, correspondingly, the range. Due to its lack of dependence on a prior model, training, or need for channel switching, this is currently the most popular approach for ranging with Wi-Fi. Correspondingly, there have been many prior works that have explored this method \cite{Li2000, Thorbjornsen_2010, Marcaletti2014, Kotaru2015, Rea2017}. The Fine Time Measurement (FTM) ranging protocol standardized by the 802.11mc \cite{IEEEWiFi2020} is also based on RTT-based ranging. The use of FTM measurements for passive ranging and the correction for the differing clock speeds at the TX and RX was explored in \cite{Banin2019}. The use of MUSIC algorithm to deal with multi-path was proposed in \cite{Kevin2020}. The performance of FTM for localization was studied in \cite{Ibrahim2018, Ma2022} and shown to provide an accuracy of $\sim 1$ m. Further security enhancements to the FTM protocol, and extensions to enable simultaneous multi-device ranging and passive ranging are being added in the latest 802.11az Wi-Fi standard \cite{IEEE_11az}. 

Methods to fuse the information from RSSI, CSI and/or RTT with each other and with information from other sensors have also been explored to improve the range estimates \cite{Woodman2008, Poulose2019, Yu2019, LiuXu2021, Guo2022}. 

The organization of the paper is as follows: the system model is discussed in Section \ref{sec_chan_model}; the estimation of CSI is analyzed in Section \ref{sec_csi_analysis}, and procedure to isolate the sum-CP from the CSI is described in Section \ref{sec_carrier_phase_est}; the differential range estimation using the sum-CP measurements is described in Section \ref{sec_dRange_est}; the necessary conditions on the frame transmissions are summarized in Section \ref{sec_summary_methodology}; the evaluations on simulated and real-field data are provided in Section \ref{sec_eval_results}; future directions of research are suggested in Section \ref{sec_future_dir}; and, finally, the conclusions are summarized in Section \ref{sec_conclusions}.

\textbf{Notation:} Scalars are represented by light-case letters; and sets by light-case calligraphic letters. Additionally, ${\rm j} = \sqrt{-1}$ and $\|a\|, \angle a, a^{*}$ represent magnitude, phase angle and the complex conjugate, respectively, of a complex scalar $a$. In addition, $\mathrm{mod}\{\cdot, a\}$ is the modulo function that provides the remainder after division with $a$, $\mathrm{Uni}[a,b]$ represents a uniformly distributed random variable in the range $[a,b]$, $\mathcal{CN}(a,b)$ represents a circularly-symmetric complex Gaussian random variable with mean $a$ and standard deviation $b$. Furthermore, $\mathrm{c} \triangleq 3 \times {10}^8$ m/s is the speed of light, $\mathrm{Re}\{\}$ represents the real component of a complex argument, $\mathbb{Z}$ is the set of integers, $\mathbb{R}$ is the set of real numbers, $\mathbb{C}$ is the set of complex numbers. 

\section{System model} \label{sec_chan_model}
\noindent We consider a system setup that has a stationary Wi-Fi access point as STA 1, a mobile Wi-Fi station as STA 2, both having a single antenna, as shown in Fig.~\ref{Fig_system_illustrate}. The goal is to estimate, at STA 1, the differential range of the STA 2 over a window of time.\footnote{Although the analysis is presented for range estimation being performed at STA 1, it can be readily extended for estimation at STA 2.} To enable the differential range estimation, STA 1 transmits a sequence of $P$ CP Request frames\footnote{Such frames can either be FTM frames in EDCA FTM protocol and I2R NDP frames in case of trigger based and non-trigger based FTM protocol.}, indexed as $\mathcal{P} = \{p \in \mathbb{Z} | 1 \leq p \leq P\}$. Let the transmit time of the $p$-th frame be $t_p^{(1)}$, as measured by STA 1. 
\begin{figure}[!htb]
\centering
\includegraphics[width= 0.48\textwidth]{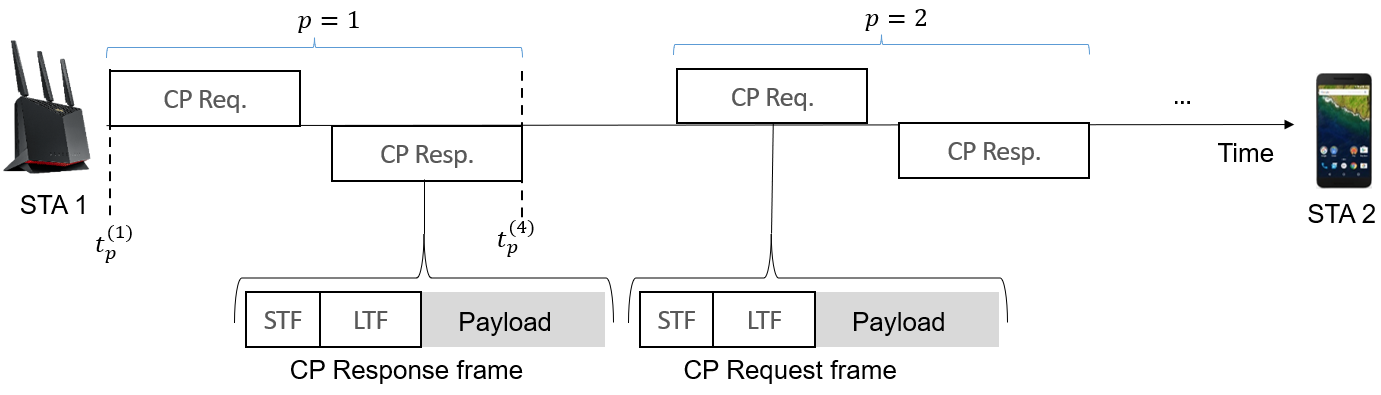}
\caption{An illustration of the CP Request and Response frame exchange between two stations to enable ranging.}
\label{Fig_system_illustrate}
\end{figure}
After receiving the $p$-th CP Request frame $p$, STA 2 also transmits the $p$-th CP Response frame 
and it is received by STA 1 at time $t_p^{(4)}$ (as measured by Station 1). To prevent any confusion, all the time values mentioned in this paper shall be in the time reference frame of STA 1. 
The header and the payload of these CP frames are encoded using orthogonal frequency division multiplexing (OFDM) over $K$ subcarriers indexed as $\mathcal{K} = \{k \in \mathbb{Z} | \lfloor \frac{-K+1}{2}\rfloor \leq k \leq \lfloor \frac{K-1}{2}\rfloor \}$ and with a symbol duration of $T_{\rm s}$ and cyclic prefix duration of $T_{\rm cy}$. From the Long Training Field of the $p$-th received CP Request frame and Response frame, STA 2 and STA 1 can estimate the CSI for each subcarrier $k$ as $\bar{h}_{p,k}^{(2)}$ and $\bar{h}_{p,k}^{(4)}$, respectively. Using these CSI measurements the STAs can further estimate the CP values $\widehat{\psi}^{(2)}_{p}$ and $\widehat{\psi}^{(4)}_{p}$ as defined later in Section \ref{subsec_symbol_timing_est}. 
We assume that the value of $\widehat{\psi}^{(2)}_{p}$ can be shared by STA 2 with STA 1 as the payload of a subsequent frame. 
In addition, we assume that using the training fields and pilot subcarriers of the CP Response frame $p$, STA 1 can also obtain an estimate of the carrier frequency offset $\bar{f}_{{\rm CFO},p}$ of STA 2 with respect to STA 1 with a precision of $\pm F$ Hz \cite{Sourour2004}.

For ease of analysis, we assume that the channel between the STAs is a purely LoS channel with a time-varying propagation delay $\rho_p$ at the time of $p$-th frame exchange. Note that the true range between the STAs is then $A_p \triangleq \mathrm{c} \rho_p$ (where $\mathrm{c}$ is the speed of light) and the differential range is $D_p \triangleq A_p-A_{p-1}$. The goal of the paper is to find accurate estimates of $\{D_p | 2 \leq p \leq P\}$ at STA 1 using the FTM time stamps $t_p^{(1)}$, $t_p^{(4)}$, the CFO $\bar{f}_{{\rm CFO},p}$ and the CP values $\widehat{\psi}^{(2)}_{p}$ and $\widehat{\psi}^{(4)}_{p}$ for $p \in \mathcal{P}$. The overall signaling and estimation steps performed at STA 1 and STA 2 are summarized in Fig.~\ref{Fig_summary_steps}. 
\begin{figure}[!htb]
\centering
\includegraphics[width= 0.48\textwidth]{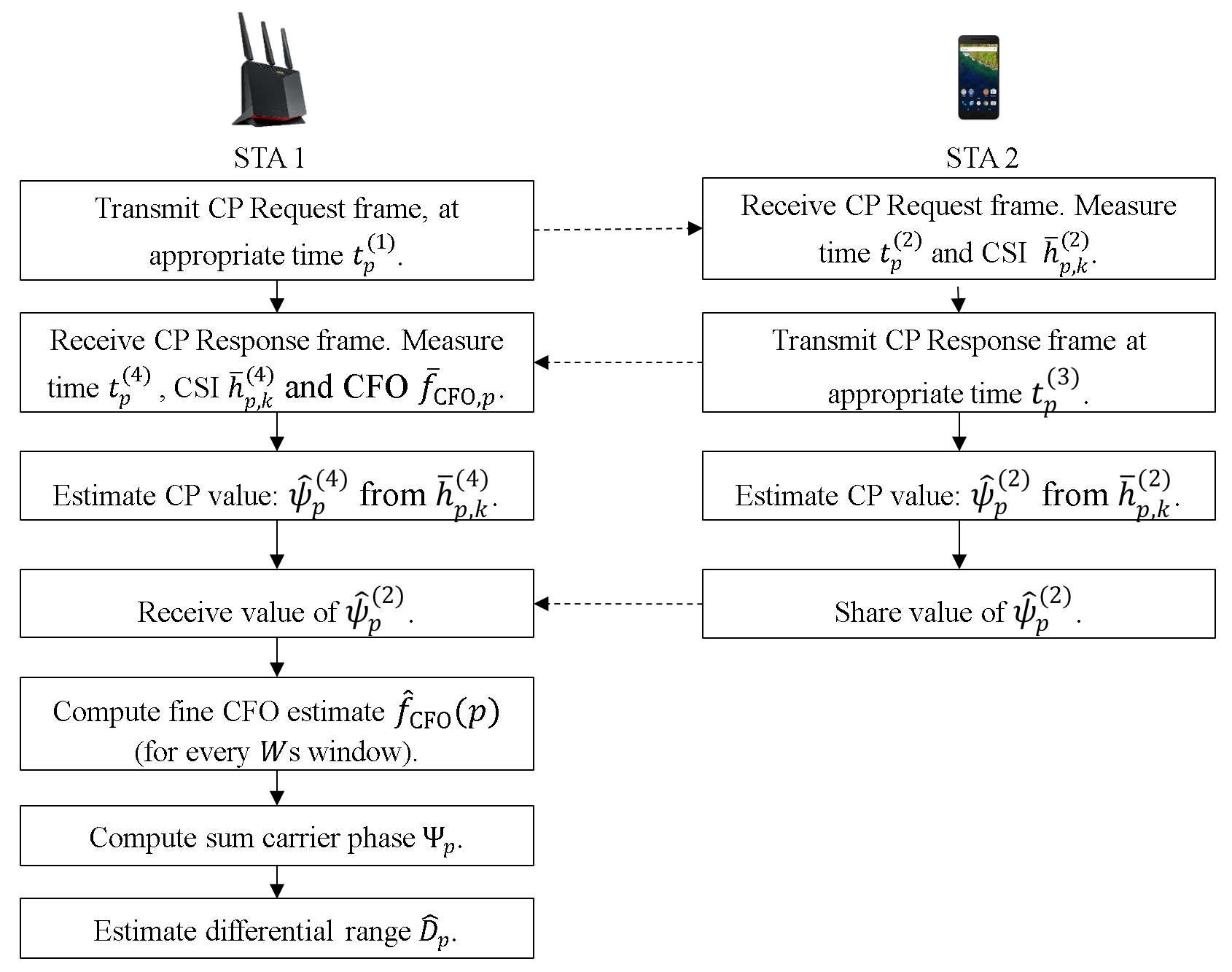}
\caption{Frame exchanges and steps performed by the STA 1 and STA 2.}
\label{Fig_summary_steps}
\end{figure}

\subsection{Modeling the non-idealities} \label{subsec_nonideal_model}
\noindent Without loss of generality, we assume that STA 1's crystal oscillator is accurate, while STA 2's crystal oscillator may differ from STA 1's oscillator by an unknown factor $\eta_p$ which can change slowly with time. For typical Wi-Fi systems, we have $|\eta_p| \leq 2 \times 10^{-5}$ \cite{IEEEWiFi2020} and $\eta_p$ has a coherence time of $\approx 1$ s. We also assume that all critical communication operations, including the carrier waveform, time stamps, and the analog-to-digital converter sampling clock at a STA are generated from the crystal oscillator and suffer the same offset \cite{Sourour2004}. Correspondingly, we have the following: 
\begin{enumerate}
\item The carrier frequency at STA 2 is $(1+\eta_p)f_{\rm c}$, where $f_{\rm c}$ is the carrier frequency at STA 1. Note that then the CFO estimate at STA 1 can be modeled as 
\begin{eqnarray} \label{eqn_CFO_est_init_model}
\bar{f}_{{\rm CFO},p} - \eta_p f_{\rm c} \sim \mathrm{Uni}[-F,F] . 
\end{eqnarray}
\item The sampling rate of the analog-to-digital converter at STA 2 is $K (1+\eta_p)/T_{\rm s}$.
\item The symbol duration for OFDM modulation/demodulation at STA 2 is $T_{\rm s}/(1 + \eta_p)$.
\item The timing clock at STA 2 runs faster by a factor of $(1+\eta_p)$ compared to STA 1.
\end{enumerate}
These can cause some phase noise \cite{Ratnam2019, Ratnam2020} and inter-carrier-interference in the CSI obtained by the STAs.
COTS Wi-Fi chips also suffer from several additional impairments which can impact the CSI estimated by them, as enumerated below. 
\begin{enumerate}
\item \emph{Gain error} - Due to the variable gain from the automatic gain control circuit, the amplitude of the CSI estimated from each received frame can vary significantly. This problem has been well investigated and several good gain error correction methods can be found in \cite{Ratnam2023}. Since the focus in this paper is on the phase of CSI, here we implicitly ignore the amplitude errors, assuming such corrections have already been performed.
\item \emph{Symbol timing error} - The symbol start time error arises primarily from errors in detecting the first arriving replica of the Short Training Field of a received frame. The error is dependent on the signal-to-noise ratio (SNR) and the BW of the system, and is independent for each received frame. Experiments have shown that these timing errors are usually smaller than $20 T_{\rm s}/K$ \cite{Ratnam2023}. 
\item \emph{Random phase rotation} - Experiments with several COTS Wi-Fi chips have shown that when receiving a frame, a STA may arbitrarily rotate the phase of the received packet by $2\pi n/N$ where $N$ is an implementation dependent parameter, and $0 \leq n < N$ is an arbitrary number (see Fig.~\ref{Fig_CFO_impact_illustrate}). Furthermore, the rotation can be different for each frame. This is likely due to cycle slips in phase-lock-loop circuits of the transceivers, and this behavior has also been observed by previous works in the context of angle-of-arrival estimation using inter-antenna phase \cite{Zhang2020, Zhuo2017, Zubow2021}. In this work we assume both STA1 and STA2 have the same value of $N$, and that $N$ is known. Typically we have $N=1, 2$ or $4$.
\end{enumerate}
Finally, we assume that due to differences in the front-end hardware of the uplink and downlink channels, the uplink channel from STA 2 to STA 1 (used for the CP Response frame) faces an additional fixed propagation delay of $\Delta\rho$.

\section{Analysis of CSI estimates} \label{sec_csi_analysis}
\noindent Given the system model, we have following lemmas on the CSI estimated from the CP Request and Response frames. 
\begin{lemma} \label{Lemma_CSI_acq}
If the CFO compensation during the OFDM demodulation at STA 2 is accurate, the CSI estimated from the $p$-th CP Request frame at STA 2 can be expressed as
\begin{eqnarray} \label{eqn_lemma_CSI_acq}
\bar{h}_{p,k}^{(2)} &=& e^{{\rm j} 2 \pi \frac{k}{T_{\rm s}} \tau_p^{(2)}} e^{{\rm j}2 \pi \frac{n^{(2)}_p}{N}} e^{-{\rm j} \phi_p} e^{- {\rm j} 2 \pi f_{\rm c} (1+\eta_p)\rho_p } ,
\end{eqnarray}
where $\tau_p^{(2)}$ is the symbol-start time detection error at STA 2, $2 \pi n^{(2)}_p \big/ N$ is the random phase rotation introduced by STA 2, and $\phi_p$ is the phase-offset of the carrier waveform of STA 2 with respect to the carrier waveform of STA 1 at time $t_p^{(1)}$.
\end{lemma}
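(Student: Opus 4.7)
The plan is to trace a known Long Training Field (LTF) waveform through three stages---transmission at STA~1, line-of-sight propagation, and down-conversion/OFDM demodulation at STA~2---and then superimpose the COTS hardware rotation listed in Section~\ref{subsec_nonideal_model}. Each of the four exponential factors in \eqref{eqn_lemma_CSI_acq} will be pinned down in a different stage, so the expression can be verified factor by factor rather than algebraically combining everything at the end.

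First I would write the transmitted passband waveform as $\mathrm{Re}\{s(t-t_p^{(1)})\, e^{\mathrm{j} 2\pi f_{\rm c} t + \mathrm{j}\alpha_1}\}$, where $s(t) = \sum_k x_k e^{\mathrm{j} 2\pi k t/T_{\rm s}}$ is the baseband LTF and $\alpha_1$ is STA~1's initial carrier phase, and model the LoS channel as a pure delay $\rho_p$. The received passband signal at STA~2 is then $\mathrm{Re}\{s(t-t_p^{(1)}-\rho_p)\, e^{\mathrm{j} 2\pi f_{\rm c}(t-\rho_p) + \mathrm{j}\alpha_1}\}$, which already furnishes a propagation-phase factor of $e^{-\mathrm{j} 2\pi f_{\rm c}\rho_p}$. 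Mixing against STA~2's oscillator $e^{\mathrm{j} 2\pi (1+\eta_p) f_{\rm c} t + \mathrm{j}\alpha_2}$ and low-pass filtering produces a baseband signal that additionally carries $e^{-\mathrm{j} 2\pi \eta_p f_{\rm c}(t-t_p^{(1)})}\, e^{-\mathrm{j}\phi_p}$, once one identifies $\phi_p = 2\pi\eta_p f_{\rm c} t_p^{(1)} + (\alpha_2-\alpha_1)$ as the STA~2--STA~1 carrier offset evaluated at $t_p^{(1)}$. The assumption that CFO compensation is accurate lets me multiply by $e^{\mathrm{j} 2\pi \eta_p f_{\rm c}(t-t_{\rm start,2})}$ with $t_{\rm start,2} \triangleq t_p^{(1)} + \rho_p + \tau_p^{(2)}$; the residual CFO collapses to a constant and fuses with the earlier propagation phase to give the target factor $e^{-\mathrm{j} 2\pi f_{\rm c}(1+\eta_p)\rho_p}$.

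The third factor $e^{\mathrm{j} 2\pi k \tau_p^{(2)}/T_{\rm s}}$ would then drop out of the OFDM demodulator. Applying the $T_{\rm s}$-long DFT anchored at $t_{\rm start,2}$, substituting $u = t - t_p^{(1)} - \rho_p$ in the integral, and invoking the cyclic-prefix property (valid when $|\tau_p^{(2)}| \leq T_{\rm cy}$) reduces the DFT bin to $T_{\rm s} x_k e^{\mathrm{j} 2\pi k \tau_p^{(2)}/T_{\rm s}}$ times the carrier-phase factor already obtained, so dividing by the known pilot $x_k$ recovers the subcarrier-dependent phase ramp. The final factor $e^{\mathrm{j} 2\pi n_p^{(2)}/N}$ is simply the third non-ideality of Section~\ref{subsec_nonideal_model} (random phase rotation on a per-frame basis at the receiver), which applies uniformly to the entire received symbol and hence multiplies the CSI estimate as-is.

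The delicate step is the CFO-compensation bookkeeping: the compensation reference time must be chosen so that the $-2\pi f_{\rm c}\rho_p$ and $-2\pi \eta_p f_{\rm c}\rho_p$ terms fuse cleanly into $-2\pi f_{\rm c}(1+\eta_p)\rho_p$, and one has to argue that the leftover cross-term $e^{-\mathrm{j} 2\pi \eta_p f_{\rm c}\tau_p^{(2)}}$ is negligible (or absorbable) under the bounds $|\eta_p| \leq 2\times 10^{-5}$ and $|\tau_p^{(2)}|\leq 20 T_{\rm s}/K$ from Section~\ref{subsec_nonideal_model}; the analogous $\eta_p\tau_p^{(2)}$ subcarrier perturbation, and the mild sampling-rate mismatch noted in items~2--4 of the non-ideality list, also have to be checked as second-order effects that do not affect the stated form. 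Once this telescoping is pinned down, the remaining arithmetic is mechanical.
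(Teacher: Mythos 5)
Your proposal is correct and follows essentially the same route as the paper's Appendix~\ref{appdix_csi_acq_frame}: the same passband signal model, the same CFO-compensation reference time $t_p^{(1)}+\rho_p+\tau_p^{(2)}$, the same identification of $\phi_p$, and the same two approximations ($|\eta_p f_{\rm c}\tau_p^{(2)}|\ll 1$ for the cross-term and the second-order sampling-rate mismatch, which the paper handles via $|\eta_p m K|\ll 1$), with the random phase rotation appended at the end exactly as the paper implicitly does.
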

\begin{proof}
See Appendix \ref{appdix_csi_acq_frame}.
\end{proof}
\begin{lemma} \label{Lemma_CSI_resp}
If the CFO compensation during the OFDM demodulation at STA 1 is accurate, the CSI estimated from the $p$-th CP Response frame at STA 1 can be expressed as
\begin{eqnarray} \label{eqn_lemma_CSI_resp}
\bar{h}_{p,k}^{(4)} &=& e^{{\rm j} 2 \pi \frac{k}{T_{\rm s}} \tau_p^{(4)}} e^{{\rm j}2 \pi \frac{n^{(4)}_p}{N}} e^{{\rm j} \phi_p} e^{- {\rm j} 2 \pi f_{\rm c} (1+\eta_p)(\rho_p+\Delta\rho) } \nonumber \\
&& \times e^{{\rm j} 2 \pi \eta_p f_{\rm c} [t_p^{(4)}-t_p^{(1)}]} ,
\end{eqnarray}
where $\tau_p^{(4)}$ is the symbol-start time detection error at STA 1, $2 \pi n^{(4)}_p \big/ N$ is the random phase rotation introduced by STA 1, $\Delta\rho$ is the additional propagation delay of the uplink channel (see Section \ref{subsec_nonideal_model}), and $\phi_p$ is as defined in Lemma \ref{Lemma_CSI_acq}.
\end{lemma}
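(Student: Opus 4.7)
The plan is to mirror the argument of Lemma \ref{Lemma_CSI_acq} with the transmitter/receiver roles swapped (STA 2 now transmits, STA 1 receives), while being careful that the phase-offset reference $\phi_p$ is fixed at time $t_p^{(1)}$ by definition, so its effect has to be propagated through to the arrival time $t_p^{(4)}$. I expect the same four physical ingredients (propagation phase through LoS, symbol-timing error at the receiver, random phase rotation at the receiver, and residual CFO phase after compensation) to appear, the only non-trivial modifications being that (i) the carrier frequency at the transmitter is now $(1+\eta_p)f_{\rm c}$ rather than $f_{\rm c}$, (ii) the one-way delay is $\rho_p + \Delta\rho$, and (iii) the sign of $\phi_p$ flips because STA 2's oscillator is the source of phase rather than the demodulator.

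Concretely, I would first write the passband signal emitted by STA 2 at its own local transmit epoch $t_p^{(3)}$: it is an OFDM waveform modulated on a carrier $\cos\bigl(2\pi(1+\eta_p)f_{\rm c}\,(t-t_p^{(1)}) + \phi_p\bigr)$, using the fact from Section \ref{subsec_nonideal_model} that STA 2's oscillator runs at $(1+\eta_p)f_{\rm c}$ and has phase offset $\phi_p$ relative to STA 1 \emph{at time} $t_p^{(1)}$. Next, I would pass this through the pure-LoS channel, which applies a delay of $\rho_p + \Delta\rho$ and hence injects the factor $e^{-{\rm j}2\pi (1+\eta_p)f_{\rm c}(\rho_p+\Delta\rho)}$ onto the carrier once I write things in complex-baseband form. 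Then I would demodulate at STA 1 with its own (accurate) oscillator at $f_{\rm c}$; the baseband phase is obtained by multiplying by $e^{-{\rm j}2\pi f_{\rm c}(t - t_p^{(1)})}$ and sampling near $t_p^{(4)}$, which leaves a residual phase $2\pi\eta_p f_{\rm c}(t_p^{(4)}-t_p^{(1)}) + \phi_p$ coming from the carrier-frequency and initial-phase mismatch evaluated at the arrival time $t_p^{(4)}$. Finally, on a per-subcarrier basis, the symbol-start detection offset $\tau_p^{(4)}$ contributes the usual $e^{{\rm j}2\pi(k/T_{\rm s})\tau_p^{(4)}}$ term, and the hardware cycle-slip at STA 1 contributes $e^{{\rm j}2\pi n_p^{(4)}/N}$. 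Combining these factors reproduces (\ref{eqn_lemma_CSI_resp}).

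The hypothesis that CFO compensation at STA 1 is accurate ensures that no additional inter-carrier-interference or per-subcarrier-$k$-dependent CFO residue appears; I would invoke this exactly where it was used in the proof of Lemma \ref{Lemma_CSI_acq} (in the OFDM demodulation step), so that only the \emph{subcarrier-independent} residual phase $2\pi\eta_p f_{\rm c}(t_p^{(4)}-t_p^{(1)})$ survives. Likewise the assumption that $\eta_p$ is constant over the frame (its coherence time is $\approx 1$s, far larger than a frame exchange) is used to push $(1+\eta_p)$ through both the carrier-phase-at-transmission and the propagation-delay terms as a common multiplier.

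The main obstacle, and the place where the proof differs most from Lemma \ref{Lemma_CSI_acq}, will be the bookkeeping that produces the extra factor $e^{{\rm j}2\pi\eta_p f_{\rm c}[t_p^{(4)}-t_p^{(1)}]}$. Because $\phi_p$ is anchored at $t_p^{(1)}$ rather than at the Response-frame transmission instant $t_p^{(3)}$, one has to separately track how much STA 2's carrier phase has drifted relative to STA 1's between $t_p^{(1)}$ and the sampling time $t_p^{(4)}$, and verify that the pieces of this drift that come from the transmit interval $[t_p^{(1)}, t_p^{(3)}]$, the propagation interval of length $\rho_p+\Delta\rho$, and the demodulation at $t_p^{(4)}$ telescope cleanly into a single $\eta_p f_{\rm c}(t_p^{(4)}-t_p^{(1)})$ term, while any $f_{\rm c}\rho_p$-sized pieces are absorbed into the $e^{-{\rm j}2\pi f_{\rm c}(1+\eta_p)(\rho_p+\Delta\rho)}$ factor. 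Once this accounting is done correctly, the remaining steps are routine manipulations identical to those in Appendix \ref{appdix_csi_acq_frame}.
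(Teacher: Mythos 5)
Your proposal follows essentially the same route as the paper's Appendix B: write STA 2's transmit waveform with its offset carrier $(1+\eta_p)f_{\rm c}$ and phase $\phi_p$ anchored at $t_p^{(1)}$, delay it by $\rho_p+\Delta\rho$, demodulate at STA 1 with exact CFO compensation so that only the subcarrier-independent residual $2\pi\eta_p f_{\rm c}(t_p^{(4)}-t_p^{(1)})$ survives, and collect the $\tau_p^{(4)}$ linear phase via orthogonality. The only cosmetic differences are that the paper parametrizes the transmission instant as $t_p^{(4)}-\rho_p-\Delta\rho-\tau_p^{(4)}$ rather than via $t_p^{(3)}$, and additionally invokes $|\eta_p m K|\ll 1$ to neutralize the $(1+\eta_p)$ scaling of STA 2's subcarrier spacing, a step your sketch glosses over but which does not change the argument.
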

\begin{proof}
See Appendix \ref{appdix_csi_resp_frame}.
\end{proof}
Intuitively in either of the above lemmas, we note that (i) the first term is the baseband phase rotation due to symbol start error, (ii) the second term is the random phase rotation introduced for implementation-specific reasons, (iii) the third term is the initial CP offset, (iv) the fourth term is the CP rotation due to the propagation delay, (v) the fifth term (if present) is the phase-rotation due to CFO from start of frame exchange till the frame detection time. 

\section{Isolating the sum carrier phase component} \label{sec_carrier_phase_est}
\noindent From \eqref{eqn_lemma_CSI_acq} and \eqref{eqn_lemma_CSI_resp}, note that we are interested in isolating the phase of fourth term, viz. $2 \pi f_{\rm c} (1+\eta_p) \rho_p$, which is the CP component. For doing so we need to get rid of as many of the other terms as possible, which is discussed in this section.

\subsection{Estimating CP values} \label{subsec_symbol_timing_est}
\noindent From \eqref{eqn_lemma_CSI_acq} and \eqref{eqn_lemma_CSI_resp}, we note that $\bar{h}_{p,k}^{(2)}, \bar{h}_{p,k}^{(4)}$ have a linear phase response with $k$, whose slope is determined by the symbol start time errors $\tau_p^{(2)}, \tau_p^{(4)}$. 
Therefore, their estimates $\widehat{\tau}_p^{(2)}, \widehat{\tau}_p^{(4)}$ can be obtained by running the root-MUSIC algorithm \cite{Rao1989} on $\bar{h}_{p,k}^{(2)}$ and $\bar{h}_{p,k}^{(4)}$ at STA 2 and STA 1, respectively, and picking the most dominant channel delay. 
Several other lower-complexity alternatives also exist to estimate $\tau_p^{(2)}, \tau_p^{(4)}$, but are skipped here for brevity. We can then estimate the CP values as
\begin{eqnarray} \label{eqn_psi_correct_music}
\widehat{\psi}^{(\bullet)}_{p} &=& \angle \left[ \sum_{k \in \mathcal{K}} \bar{h}_{p,k}^{(\bullet)} e^{-{\rm j} 2 \pi \frac{k}{T_{\rm s}} \widehat{\tau}_p^{(\bullet)}} \right],
\end{eqnarray}
where $\bullet = 2,4$ for STA 2 and STA 1 respectively. The CP value $\widehat{\psi}^{(2)}_{p}$ is then shared by STA 2 with STA 1 for further processing.

\subsection{Refining the CFO estimate} \label{subsec_CFO_est}
\noindent If the estimates $\widehat{\tau}_p^{(2)}$ and $\widehat{\tau}_p^{(4)}$ from Section \ref{subsec_symbol_timing_est} are accurate, we can show using \eqref{eqn_lemma_CSI_acq} and \eqref{eqn_lemma_CSI_resp}, that
\begin{subequations} \label{eqn_psi_simplify}
\begin{eqnarray}
\widehat{\psi}^{(2)}_{p} & \approx & {\rm mod}\bigg\{ 2 \pi \frac{n^{(2)}_p}{N} - \phi_p - 2 \pi f_{\rm c} (1+\eta_p)\rho_p, 2\pi \bigg\}, \\
\widehat{\psi}^{(4)}_{p} & \approx & {\rm mod}\bigg\{2 \pi \frac{n^{(4)}_p}{N} + \phi_p - 2 \pi f_{\rm c} (1+\eta_p)(\rho_p+\Delta\rho) \nonumber \\
&& + 2 \pi \eta_p f_{\rm c} [t_p^{(4)}-t_p^{(1)}], 2\pi \bigg\}.
\end{eqnarray}
\end{subequations}
As is evident, one of the remaining unwanted terms is the one dependent on the CFO $\eta_p f_{\rm c}$. Although an estimate of the CFO is already available at STA 1 from the response frame (see \eqref{eqn_CFO_est_init_model}), its precision of $\pm F$ Hz may be insufficient for CP-based ranging. Correspondingly, in this subsection we propose methods to refine it. 
From \eqref{eqn_psi_simplify} note that we can write
\begin{subequations} \label{eqn_psi_simplify_diff}
\begin{flalign}
& N \big(\widehat{\psi}^{(2)}_{p} - \widehat{\psi}^{(2)}_{p-1} \big) \stackrel{2 \pi}{\approx} N(\phi_{p-1} - \phi_p) - 2 \pi N f_{\rm c} (\rho_p-\rho_{p-1}), & \\
& N \big( \widehat{\psi}^{(4)}_{p} - \widehat{\psi}^{(4)}_{p-1} \big) \stackrel{2 \pi}{\approx} N(\phi_p - \phi_{p-1}) - 2 \pi N f_{\rm c} (\rho_p - \rho_{p-1}) & \nonumber \\
& \qquad \qquad + 2 \pi N \eta_p f_{\rm c} [t_p^{(4)}-t_p^{(1)} - t_{p-1}^{(4)} + t_{p-1}^{(1)}], &
\end{flalign}
\end{subequations}
where we define $\stackrel{2 \pi}{\approx}$ to mean approximate equivalence modulo $2\pi$ and use the facts that $n^{(2)}_p$ and $n^{(4)}_p$ are integers, and $\eta_p f_{\rm c} (\rho_p - \rho_{p-1})\approx 0$. Since $\phi_p$ is defined as the CP offset at the beginning of frame exchange $p$, it further follows that $\phi_p = \phi_{p-1} + \pi (\eta_{p}+\eta_{p-1}) f_{\rm c} [t_p^{(1)} - t_{p-1}^{(1)}]$. Applying this to \eqref{eqn_psi_simplify_diff} we get
\begin{flalign} \label{eqn_psi_simplify_diff_2}
& N \big(\widehat{\psi}^{(4)}_{p} - \widehat{\psi}^{(2)}_{p} - \widehat{\psi}^{(4)}_{p-1} + \widehat{\psi}^{(2)}_{p-1} \big) \stackrel{2\pi}{\approx} & \nonumber \\
& \qquad \qquad  2 \pi N f_{\rm c}\eta_p [t_p^{(4)} - t_{p-1}^{(4)} + t_p^{(1)} - t_{p-1}^{(1)}] , 
\end{flalign}
where we use the fact that $\eta_p \approx \eta_{p-1}$ since crystal offset drifts slowly. Note that \eqref{eqn_psi_simplify_diff_2} is only dependent on the CFO $\eta_p f_{\rm c}$ and is independent of the other unknowns. This analysis is justified using data from an example real-world measurement in Fig.~\ref{Fig_CFO_impact_illustrate}, where we clearly observe a phase ramp caused by the CFO. We also observe two ramps shifted by $\pi$ radians showing the random phase rotation ($N=2$). 
\begin{figure}[!htb]
\centering
\includegraphics[width= 0.4\textwidth]{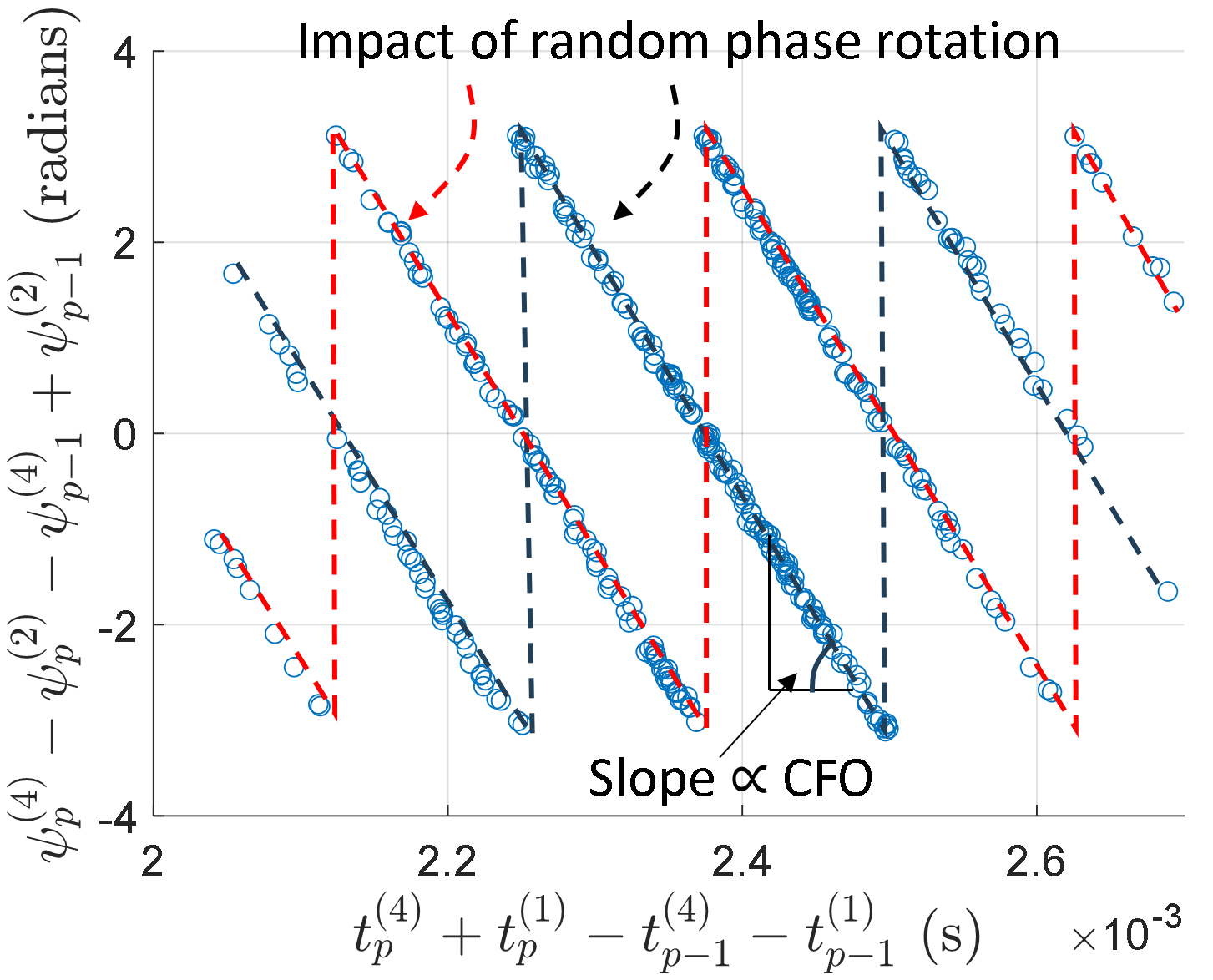}
\caption{A scatter plot of $\widehat{\psi}^{(4)}_{p} - \widehat{\psi}^{(2)}_{p} - \widehat{\psi}^{(4)}_{p-1} + \widehat{\psi}^{(2)}_{p-1}$ versus $t_p^{(4)} + t_p^{(2)} - t_{p-1}^{(4)} - t_{p-1}^{(2)}$ from a measurement campaign involving two Intel AX210 chips, with other parameters as in Section \ref{subsec_real_data}. The data belongs to a time window of $W=0.5$ s with the STA 2 speed being $0.35$ m/s.}
\label{Fig_CFO_impact_illustrate}
\end{figure}
Although $t_p^{(1)} - t_{p-1}^{(1)}$ and $t_p^{(4)} - t_{p-1}^{(4)}$ can be larger than one cycle of the CFO, we can leverage two facts to find a unique estimate for the CFO $\widehat{f}_{\rm CFO}(p)$ from \eqref{eqn_psi_simplify_diff_2}, viz., (i) we have that $|\eta_p f_{\rm c} - \bar{f}_{{\rm CFO},p}| \leq F$ and, (ii) $\eta_p$ remains approximately the same within each window of $W=0.5$ s. 
In our proposed approach, a separate value of CFO is estimated from the samples $\widehat{\psi}^{(\bullet)}_{p}$ within each time window of $W=0.5$ s and an interpolation is applied, as summarized in Algorithm \ref{Algo2}. 
\begin{algorithm}
\label{Algo2}
\caption{Estimation of CFO $\widehat{f}_{\rm CFO}(p)$}
\begin{algorithmic} 
\STATE Inputs: $\widehat{\psi}^{(2)}_{p}, \widehat{\psi}^{(4)}_{p}, t_p^{(1)}, t_p^{(4)}, \bar{f}_{{\rm CFO},p}$ for $p \in \mathcal{P}$.
\STATE $W = 0.5$ s.
\STATE $T_{\rm max} = 1$ ms. //$T_{\rm max}$ depends on the crystal stability and is set such that for any $p,q \in \mathcal{P}$ s.t. $|t_p^{(1)}-t_q^{(1)}| < W$, we have $(\eta_p-\eta_q) f_{\rm c} T_{\rm max} \ll 1$. 
\STATE $W_{\rm max} = \lceil (t_P^{(1)}-t_1^{(1)})/W \rceil$.
\STATE $\bar{\mathcal{Q}} = \{\}$.
\FOR{$w = 1:1:W_{\rm max}$}
\STATE $t_{\rm st} = t_1^{(1)} + (w-1) W$.
\STATE $t_{\rm end} = t_1^{(1)} + w W$.
\STATE $\mathcal{Q} = \big\{p \in \mathcal{P} \big| t_{\rm st} \leq t_p^{(1)} \leq t_{\rm end}, |t_p^{(1)}-t_{p-1}^{(1)}| < T_{\rm max} \big\}$.
\STATE $\bar{\bar{f}}_{\rm CFO} = \sum_{p \in \mathcal{Q}} \bar{f}_{{\rm CFO}, p} / |\mathcal{Q}|$.
\STATE Compute 
\begin{flalign}
& f_{\rm opt} = \argmax_{|f| \leq F} \mathrm{Re} \bigg\{ \sum_{p \in \mathcal{Q}}  e^{{\rm j} N \big(\widehat{\psi}^{(4)}_{p} - \widehat{\psi}^{(2)}_{p} - \widehat{\psi}^{(4)}_{p-1} + \widehat{\psi}^{(2)}_{p-1}\big)} & \nonumber \\
& \qquad \quad \times e^{ -{\rm j} 2\pi N (f + \bar{\bar{f}}_{\rm CFO}) [t_p^{(4)} + t_p^{(1)} - t_{p-1}^{(4)} - t_{p-1}^{(1)}]} \bigg\} &
\end{flalign}
\STATE $\widehat{f}_{\rm CFO}(p) = f_{\rm opt} + \bar{\bar{f}}_{\rm CFO}$ for $p \in \mathcal{Q}$.
\STATE $\bar{\mathcal{Q}} = \bar{\mathcal{Q}} \cup \mathcal{Q}$.
\ENDFOR
\FOR{$p \in \mathcal{P} \setminus \bar{\mathcal{Q}}$}
\STATE Set $\widehat{f}_{\rm CFO}(p)$ by linear-interpolation of $\widehat{f}_{\rm CFO}(\cdot)$ values from $\bar{\mathcal{Q}}$.
\ENDFOR
\STATE Return $\big\{ \widehat{f}_{\rm CFO}(p) | p \in \mathcal{P} \big\}$.
\end{algorithmic}
\end{algorithm}
We also have the following remark:
\begin{remark} \label{remark_jitter_CFO}
To estimate the CFO accurately, Algorithm \ref{Algo2} requires that each time window of $W$ s has at least $3$ samples with the inter-frame exchange spacing $t_p^{(1)}-t_{p-1}^{(1)} < T_{\rm max}$. Furthermore, for these samples the time interval $t_p^{(4)}+t_{p}^{(1)} - t_{p-1}^{(4)} -t_{p-1}^{(1)}$ must (i) either be smaller than $1/(4 N F)$ or (ii) have a jitter of the order of $1/(4 N F)$. 
\end{remark}

\subsection{Estimating the sum-CP component}
\noindent Adding the two phase values in \eqref{eqn_psi_simplify}, multiplying by $N$ and subtracting out the CFO term, we define the the sum-CP as 
\begin{eqnarray}
{\Psi}_{p} & \triangleq & \mathrm{mod} \big\{ N (\widehat{\psi}^{(2)}_{p} + \widehat{\psi}^{(4)}_{p}) \nonumber \\
&& - 2 \pi N \widehat{f}_{\rm CFO}(p) [t_p^{(4)}-t_p^{(1)}] + \pi , 2\pi \big\} - \pi \label{eqn_phi_sum} \\
& \approx & {\rm mod}\big\{ - 2 N \pi f_{\rm c} (1+\eta_p) (2\rho_p + \Delta \rho) \nonumber \\
&& + 2 N \pi (\eta_p f_{\rm c} - \widehat{f}_{\rm CFO}(p)) [t_p^{(4)}-t_p^{(1)}] + \pi, 2\pi \big\} - \pi. \nonumber
\end{eqnarray}
As is evident, all the phase-impairments are now removed and, if the CFO estimate is accurate, ${\Psi}_{p}$ can track changes in the range accurately.
However the CFO removal in \eqref{eqn_phi_sum} only performs a first-order linear fit for the phase offset between STA 1 and STA 2, and any residual phase noise may still impact ${\Psi}_{p}$. This is exemplified in Fig.~\ref{Fig_PhN_impact_illustrate}, where we plot real-world measurements of ${\Psi}_{p}$ for varying values of $t_p^{(4)}-t_p^{(1)}$ for a stationary STA 2. As evident, the mean value of ${\Psi}_{p}$ stays approximately constant over time for a stationary STA 2 (which is desired), but the residual phase noise causes the variance to grow with $t_p^{(4)}-t_p^{(1)}$. We correspondingly have the following remark:
\begin{remark} \label{remark_SIFS_spacing}
To keep the impact of any residual phase-noise in \eqref{eqn_phi_sum} low, it is desirable to design the frame exchange sequence such that $t_p^{(4)}-t_p^{(1)}$ is smaller than $1$ms. 
\end{remark}
\begin{figure}[!htb]
\centering
\includegraphics[width= 0.4\textwidth]{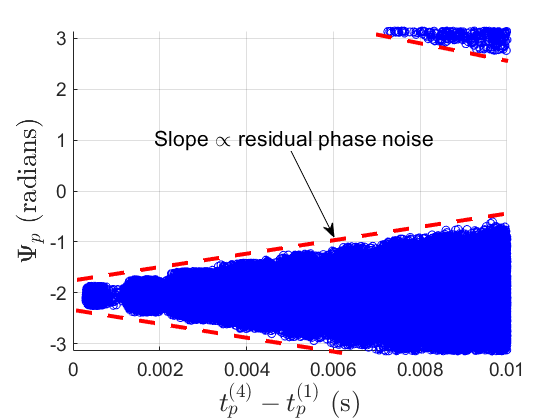}
\caption{A scatter plot of ${\Psi}_{p}$ versus $t_p^{(4)} - t_{p}^{(1)}$ from a measurement campaign involving two Intel AX210 chips, with other parameters as in Section \ref{subsec_real_data}. The measurements span a time period of $30$ s with both STA 1 and STA 2 being static.}
\label{Fig_PhN_impact_illustrate}
\end{figure}

\section{Differential and relative range estimation} \label{sec_dRange_est}
\noindent From \eqref{eqn_phi_sum}, we see that ${\Psi}_{p}$ is linearly related with the propagation delay $\rho_p$ and, since it tracks the product $2 f_{\rm c} \rho_p$, it can track range changes that are of the order of half a wavelength ($3$ cm at $5$ GHz). This is a $100 \times$ better resolution than RTT-based range estimation. However, the presence of the ${\rm mod}\{., 2 \pi\}$ in \eqref{eqn_phi_sum} introduces an ambiguity in the range estimation. In other words, a propagation delay of $\rho$ and $\rho + 1/(2 N (1+\eta_p) f_{\rm c})$ yield the same value of the corrected sum-CP ${\Psi}_{p}$. This is somewhat similar to the integer ambiguity problem faced in many phase-based ranging methods \cite{Vasisht2016}. Rather than trying to resolve this integer ambiguity by performing frequency hopping, here we use the sum-CP for differential ranging. Note that if the change in the propagation delay $\rho_p$ between adjacent frame exchanges satisfies $|\rho_p - \rho_{p-1}| < 1/(4 N (1+\eta_p) f_{\rm c})$, from \eqref{eqn_phi_sum} we can write
\begin{eqnarray} \label{eqn_Phi_diff}
\mathrm{mod}\{ {\Psi}_{p} - {\Psi}_{p-1} + \pi, 2\pi \} - \pi \approx - 4 \pi N f_{\rm c} (\rho_p - \rho_{p-1}),
\end{eqnarray}
where we use the fact that $\eta_p f_{\rm c} |\rho_p - \rho_{p-1}| \ll 1$, and we are able to get rid of the ambiguity. In other words, the differential range for the $p$-th frame exchange can be estimated without ambiguity as
\begin{eqnarray} \label{eqn_dRange_est}
\widehat{D}_p = - \mathrm{c} \times \frac{\mathrm{mod}\{ {\Psi}_{p} - {\Psi}_{p-1} + \pi, 2\pi \} - \pi}{4 \pi N f_{\rm c}} ,
\end{eqnarray}
as long as STA 2 movement between adjacent measurements is less than $1/4N$-th of a carrier wavelength. Thus, have the following observation on the maximum STA 2 speed:
\begin{remark} \label{remark_max_speed}
The maximum STA 2 speed that can be unambiguously tracked using \eqref{eqn_dRange_est} is 
\begin{eqnarray} \label{eqn_max_user_speed}
S_{\rm max} = {\rm c} \big/ \big[ 4 N f_{\rm c} \times \max_p\{t_p^{(1)}-t_{p-1}^{(1)}\} \big] .
\end{eqnarray}
\end{remark}
Note that Wi-Fi channels are susceptible to occupancy by other devices, which can cause $t_p^{(1)}-t_{p-1}^{(1)}$ to be large for some outlier $p$ and cause missing of $1/2N$-wavelength cycles in \eqref{eqn_dRange_est}. We refer to this effect as \emph{cycle slips}. 
Estimation that is robust to such outliers is possible by exploiting the temporal correlation of STA 2 speed, to estimate the range as
\begin{eqnarray} \label{eqn_dRange_est_2}
\widehat{D}_p = - \mathrm{c} \times \frac{\mathrm{mod}\{ {\Psi}_{p} - {\Psi}_{p-1} + \Theta_p + \pi, 2\pi \} - \Theta_p - \pi}{4 \pi N f_{\rm c}} , 
\end{eqnarray}
where
\begin{eqnarray}
\Theta_p &=& \sum_{q \in \mathcal{Q}_p} 4 \pi N f_{\rm c} \widehat{S}_p ( t^{(1)}_p - t^{(1)}_{q}) / {\rm c}, \nonumber \\
\widehat{S}_p &=& \argmax_{|s| \leq S_{\rm max}} \Big| \sum_{q \in \mathcal{Q}_p} e^{{\rm j} (\Psi_q-\Psi_p + \frac{4 \pi N f_{\rm c}}{\rm c} (t^{(1)}_q-t^{(1)}_p) s)}\Big|, \nonumber \\
\mathcal{Q}_p &=& \big\{ q \in \mathcal{P} \big| |t_q^{(1)} - t_p^{(1)}| \leq 0.25 \big\}, \nonumber \\
S_{\rm max} &=& {\rm c} \big/ \big[4Nf_{\rm c} \times {\rm percentile}_{80}\{ t_p - t_{p-1} | p \in \mathcal{P}\} \big] . \nonumber
\end{eqnarray}
Both \eqref{eqn_dRange_est} and \eqref{eqn_dRange_est_2} are \emph{one-shot} estimates that do not use temporal smoothing. While outside the scope of this paper, appropriate state-space models for differential range may be applied to further improve performance via Kalman filtering \cite{Welch1995}. 

The differential range estimates can further be integrated to obtain the \emph{relative range} estimates as
\begin{eqnarray} \label{eqn_rel_range}
\widehat{R}_{q,p} \triangleq \sum_{a=q+1}^{p} \widehat{D}_a, 
\end{eqnarray}
which measures the change in the range between time stamps $t_q^{(1)}$ and $t_p^{(1)}$ ($p > q$). As the user speed reaches closer to $S_{\rm max}$, we may experience cycle-slips which lead to accumulative error in $\widehat{R}_{q,p}$ for large $t_p^{(1)} - t_q^{(1)}$. This affect is studied later in Section \ref{subsec_sim_data}.

\section{Restrictions on transmission times} \label{sec_summary_methodology}
\noindent Based on the presented analysis above, we have the following recommendations on the transmit times of CP request/response frames:
\begin{enumerate}
\item To keep the frame-exchange overhead low, the median inter-frame exchange spacing $t_p^{(1)}-t_{p-1}^{(1)}$ should be $\geq 25$ ms.
\item For CFO estimation (see Remark \ref{remark_jitter_CFO}), we recommend adding a few additional frame exchanges every $W$ s with the inter-frame exchange spacing $t_p^{(1)}-t_{p-1}^{(1)}$ being small $\approx 300 \ \mu$s with a jitter of $1/(4NF)$.\footnote{These frames should ideally be exchanged within a single transmit opportunity to avoid channel contention delays. Jitter, if required, can be provided via zero padding \cite{IEEEWiFi2020}.} Such a \emph{nested}-structure is depicted in Fig.\ref{Fig_nested_array_jitter}. 
\item Based on Remark \ref{remark_SIFS_spacing}, we recommend using only a Short Inter-Frame Spacing ($16 \  \mu$s) between the end of reception of CP Request frame and transmission time of CP Response frame at STA 2. 
\end{enumerate}
\begin{figure}[!htb]
\centering
\includegraphics[width= 0.48\textwidth]{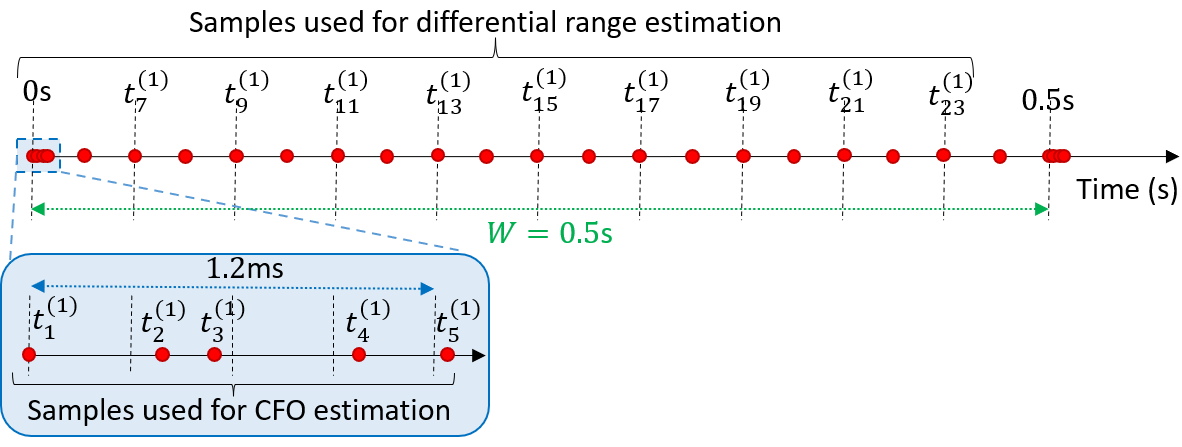}
\caption{An illustration of the jittered nested array structure for the frame exchange transmission times $\{t^{(1)}_p | p \in \mathcal{P}\}$ with outer-array spacing of $25$ ms and inner array spacing of $\approx 300 \ \mu$s.}
\label{Fig_nested_array_jitter}
\end{figure}

\section{Evaluation Results} \label{sec_eval_results}
\noindent For the evaluating the impact of different system parameters, we first use simulated CSI data. Later, we shall also validate the proposed method on real-world CSI data to show proof-of-concept. As a baseline for comparison, we use the differential range estimate $\widehat{D}_p^{\rm FTM} = \widehat{A}^{\rm FTM}_p - \widehat{A}^{\rm FTM}_{p-1}$, where $\widehat{A}^{\rm FTM}_p$ is the conventional absolute range estimate from the FTM procedure that uses the time measurements and baseband information. Here $\widehat{A}^{\rm FTM}_p$ is given by 
\begin{eqnarray} \label{eqn_FTM_Arange_est}
\widehat{A}_p^{\rm FTM} = \frac{\mathrm{c}}{2} \left( t^{(4)}_p - \widehat{\tau}^{(4)}_p - t^{(1)}_p - \frac{t^{(3)}_p - t^{(2)}_p + \widehat{\tau}^{(2)}_p}{1 + \bar{f}_{{\rm CFO},p}/f_{\rm c}} \right),
\end{eqnarray}
where $t^{(2)}_p, t^{(3)}_p$ are the time of reception of the CP request frame and transmission of CP response frame as measured by STA 2 (see Fig.~\ref{Fig_FTM_illustrate}), and $\widehat{\tau}^{(2)}_p, \widehat{\tau}^{(4)}_p$ are the estimates from Section \ref{subsec_symbol_timing_est}. For both WiDRa and FTM, we use root-MUSIC with signal-dimension set to $3$ to estimate $\widehat{\tau}^{(2)}_p, \widehat{\tau}^{(4)}_p$. Note that \eqref{eqn_FTM_Arange_est} includes the state-of-the-art FTM corrections for the symbol start time and differing clock speed \cite{Banin2019, Kevin2020}. Although we compare WiDRa to FTM here, it should be emphasized that the CP-based method uses the passband information while FTM-based ranging uses the baseband information from the same frame exchanges. Thus, they are complementary to each other and can be combined to improve overall range estimation as discussed in Section \ref{sec_future_dir}. 

\subsection{Simulated Data} \label{subsec_sim_data}
\noindent For simulations, we consider a 2-dimensional room of dimension $5 \times 5$ m with STA 1 located at its center and consider a simulation time of $30$ s. In each random epoch, STA 2 moves with a uniform speed of $S$ m/s using the Random waypoint model \cite{Hyytia2007} as depicted in Fig.~\ref{Fig_sim_layout}. 
STA 1 and STA 2 operate on a $20$ MHz channel at $f_{\rm c} = 5.2$ GHz (channel 40 of the $5$ GHz ISM band) using the 802.11ax protocol \cite{IEEEWiFi_11ax}. The transmission is via OFDM modulation with a symbol duration $T_{\rm s} = 12.8 \ \mu$s, cyclic prefix duration $T_{\rm cy} = 3.2 \ \mu$s and $K = 256$ sub-carriers. 
For enabling the differential ranging, STA 1 transmits $P=1440$ CP Request frames using a jittered nested array structure for transmit times as
\begin{eqnarray}
t_{p}^{(1)} = 0.5 \lfloor (p-1)/24 \rfloor + 0.025 ({\rm mod}\{p-1, 24\}-4), \nonumber
\end{eqnarray}
for ${\rm mod}\{p-1, 24\} \geq 5$ and,
\begin{eqnarray}
t_{p}^{(1)} = 0.5 \lfloor (p-1)/24 \rfloor + 0.0003 {\rm mod}\{p-1, 24\} + \chi_p, \nonumber
\end{eqnarray}
for ${\rm mod}\{p-1, 24\} \leq 4$ with $\chi_p \sim \mathrm{Uni}[-100,100] \ \mu$s in each epoch. Note that this generates one frame exchange every $20$ms on average. The CP Request and Response frames are assumed to have transmission duration of $100 \ \mu$s each, and STA 2 transmits the CP Response frames $16 \ \mu$s after receiving the Request frame. 
\begin{figure}[!htb]
\centering
\includegraphics[width= 0.35\textwidth]{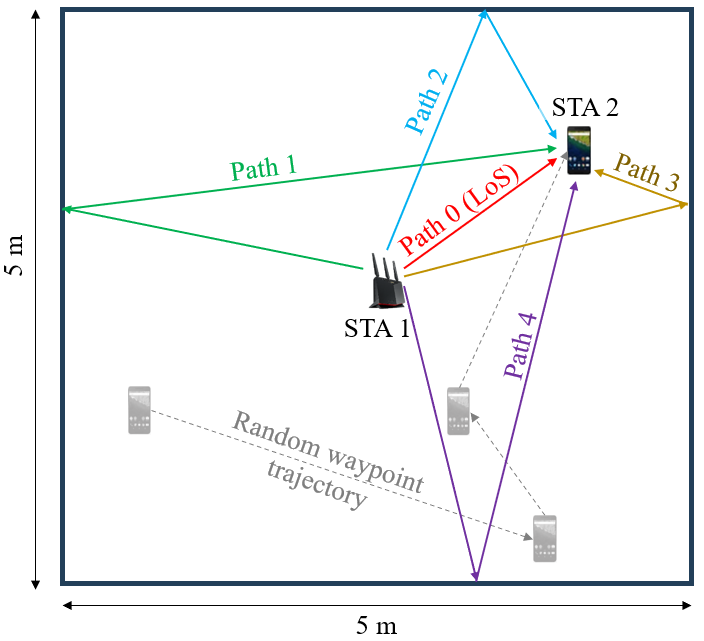}
\caption{An illustration of the simulation layout.}
\label{Fig_sim_layout}
\end{figure}

For generating the channel, we consider a simple abstraction of ray-tracing where the channel has 5 paths, viz., the LoS path, and a path each reflected from each of the $4$ walls of room as shown in Fig.~\ref{Fig_sim_layout}.\footnote{The results were found to be very similar if more reflections or scattering objects are considered while keeping the Rician factor the same.} The channel from STA 1 to STA 2 is assumed to have a Rician factor of $\kappa$, and is written as
\begin{eqnarray}
h_p(t) = \sqrt{\kappa} \frac{\delta(t-\rho_p)}{\sqrt{\kappa+1}} + \sum_{\ell=1}^4 \frac{\delta(t-\widetilde{\rho}_{p,\ell})}{\sqrt{4(\kappa+1)}},
\end{eqnarray}
where $\widetilde{\rho}_{p,\ell}$ is the propagation delay of the path reflected from the $\ell$-th wall. Note that these paths cannot be resolved with the $20$ MHz BW and do cause interference to the phase estimate in \eqref{eqn_psi_correct_music}. For the channel from STA 2 to STA 1, to model the uplink-downlink asymmetry, we consider the additional path length of $\Delta \rho = 1$ ns. 

The crystal oscillator at STA 1 is modeled to be perfect, while the one at STA 2 is modeled to have a sinusoidal offset given by
$$\eta_p = [0.5 + 0.1\sin(0.05\pi t_{p}^{(1)} + \varphi)] \times {10}^{-5},$$
where $\varphi \sim \mathrm{Uni[0,2\pi)}$ in each epoch. This $\eta_p$ is then used to determine the value of carrier frequency, sampling timing and symbol duration at STA 2. The OFDM modulation and OFDM demodulation are performed at STA1 and STA 2 as per \eqref{eqn_tx_sig_s}, \eqref{eqn_demod_Y_k}, \eqref{eqn_tx_sig_s_alt}, and \eqref{eqn_demod_Y_k_alt}. We assume a conservative receiver CFO estimation precision is $F=5$ KHz. Based on our prior experiments with COTS hardware \cite{Ratnam2023}, the symbol start time errors are modeled as $\tau^{(2)}_p, \tau^{(4)}_p \sim \mathrm{Uni}[-300,-100]$ ns. We also assume the STAs have the arbitrary phase rotation with $N=2$. For convenience, we do not model CSI gain errors, but the measured CSI is assumed to suffer from additive Gaussian noise. Finally, for the benefit of FTM, we assume that the measured time stamps have a good precision of $0.01$ ns.\footnote{Unlike FTM, WiDRa is resilient to time stamps and can do well even with a precision of $>10$ ns.}

For these settings, the impact of different system parameters on the RMSE of differential range estimation with WiDRa is depicted in Fig.~\ref{Fig_simulation_results}. Fig.~\ref{Fig_RMSE_vs_SNR} studies the impact of the SNR on performance, and as can be seen, the proposed method is robust to channel noise and can perform well even at a low SNR of $10$ dB. Fig.~\ref{Fig_RMSE_vs_Speed} investigates the impact of STA 2 speed on performance, and we observe an abrupt degradation of performance at $0.25$ m/s, but the RMSE is low below this speed. This is in good alignment with the analysis in Section \ref{sec_dRange_est}, since for chosen simulation parameters we have $S_{\rm max} = 0.28$ m/s from \eqref{eqn_max_user_speed}. Although the analysis in this paper focused on a single-path channel, in Fig.~\ref{Fig_RMSE_vs_Kfactor} we study the impact of channel multi-path on performance by varying the Rician factor. The results suggest that even in the presence of multi-path WiDRa performs well, if the channel Rician factor is $\kappa \geq 7$. Overall, we also observe that on an average WiDRa has a $100 \times$ lower RMSE for differential range estimation than the baseline one-shot FTM procedure (which uses RTT). 
\begin{figure}[!htb]
\centering
\subfloat[SNR]{\includegraphics[width= 0.4\textwidth]{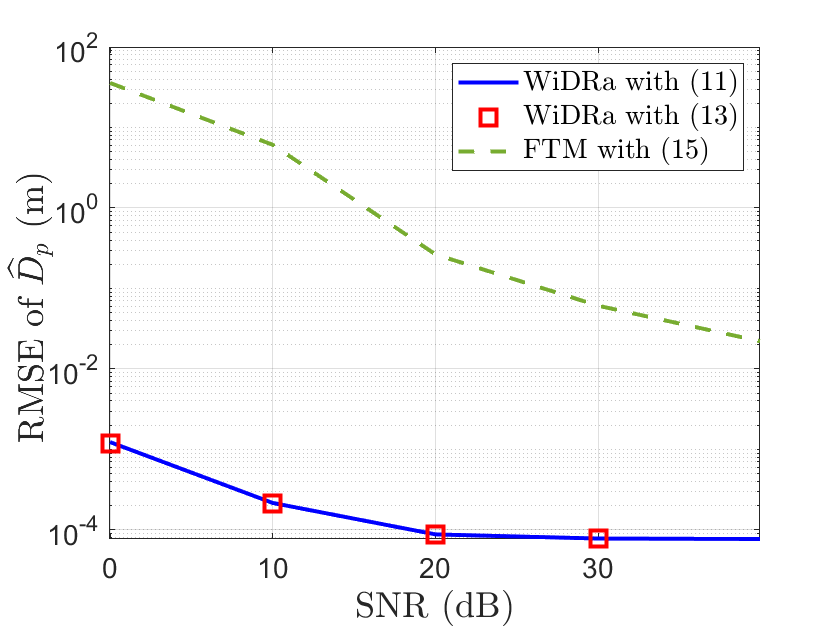} \label{Fig_RMSE_vs_SNR}}  \\
\subfloat[STA 2 speed: $S$]{\includegraphics[width= 0.4\textwidth]{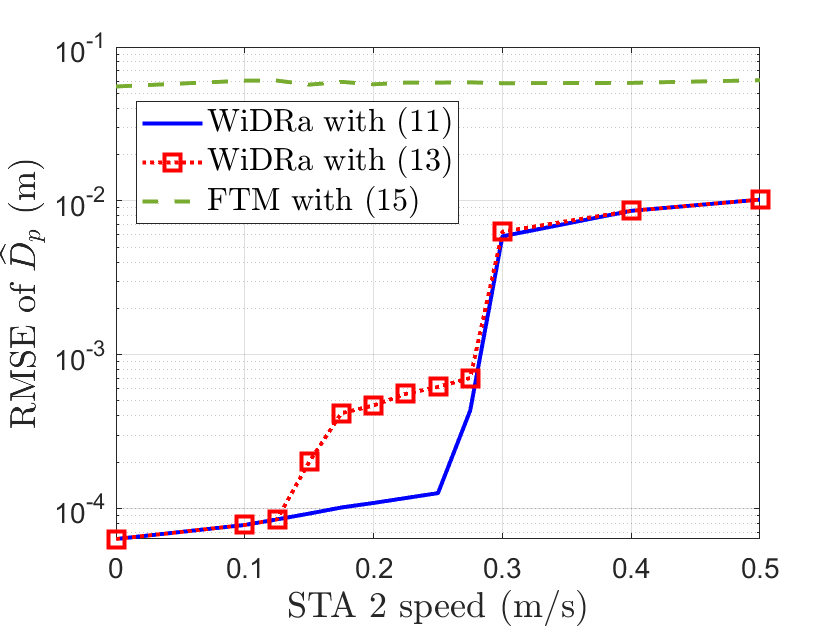} \label{Fig_RMSE_vs_Speed}} \\ 
\subfloat[K-factor: $\kappa$]{\includegraphics[width= 0.4\textwidth]{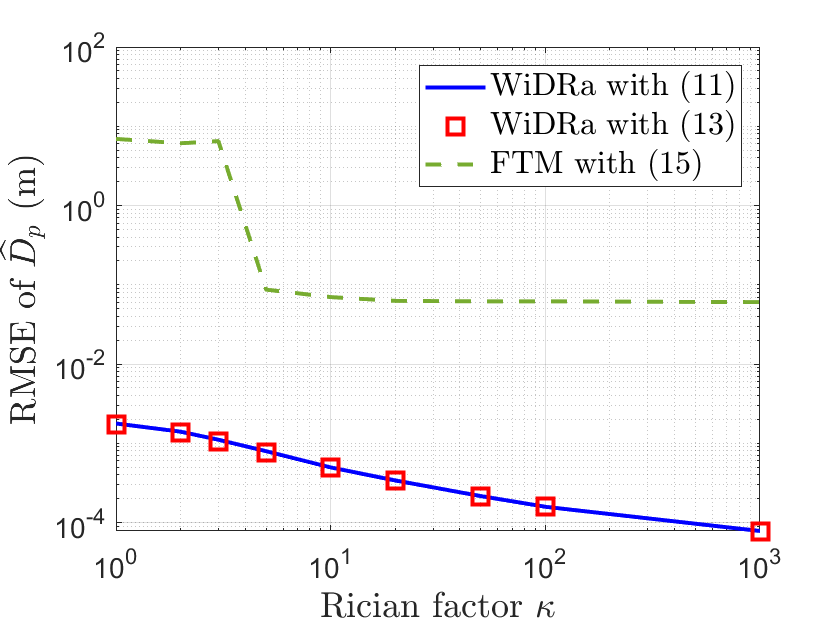} \label{Fig_RMSE_vs_Kfactor}} 
\caption{RMSE of sum-CP based differential range estimation under different simulation parameters. (a) depicts impact of SNR for $\kappa = 1000$, $S=0.1$ m/s. (b) depicts impact of $S$ for ${\rm SNR}=30$ dB and $\kappa=1000$. (c) depicts impact of Rician factor $\kappa$ for ${\rm SNR}=30$ dB and $S=0.1$ m/s.} 
\label{Fig_simulation_results}
\end{figure}
 
Next, we study the relative range estimates $\widehat{R}_{q,p}$ with WiDRa for samples separated by $t^{(1)}_p - t^{(1)}_q = T$ in Fig.~\ref{Fig_simulation_results_lags}. Fig.~\ref{Fig_RMSE_vs_lags} studies the relative range estimation RMSE versus $T$, and shows that the relative range estimation leads to only slow error accumulation over time of less than $1$ cm in $20$ s at $\kappa = 7$. We also observe that error accumulation is linear with $T$ (beyond $T=1$ s), suggesting that the only error source is cycle slips. In other words, the non-LoS paths cause \emph{local} errors in the differential range estimate that do not accumulate over time. This is justified further in Fig.~\ref{Fig_PDF_vs_lags}, where we plot the probability density function (PDF) of the relative range error $(\widehat{R}_{q,p} - {R}_{q,p})$ (modulo $\mathrm{c} \big/ 2Nf_{\rm c}$) for WiDRa with \eqref{eqn_dRange_est} as a function of the time gap $T = t^{(1)}_p - t^{(1)}_q$. Here the PDF converges just after $T=1$ s, suggesting that beyond that point the accumulated error variance due to non-LoS paths does not increase. Since cycle slips can be reduced by \eqref{eqn_dRange_est_2} and by reducing $t^{(1)}_p - t^{(1)}_{p-1}$, this is an encouraging result on the practicality of the proposed method for relative range estimation. However, further investigations may be needed for channels with diffuse scattering and non-stationary multi-path profiles.
\begin{figure}[!htb]
\centering
\subfloat[Time gap: $T$]{\includegraphics[width= 0.4\textwidth]{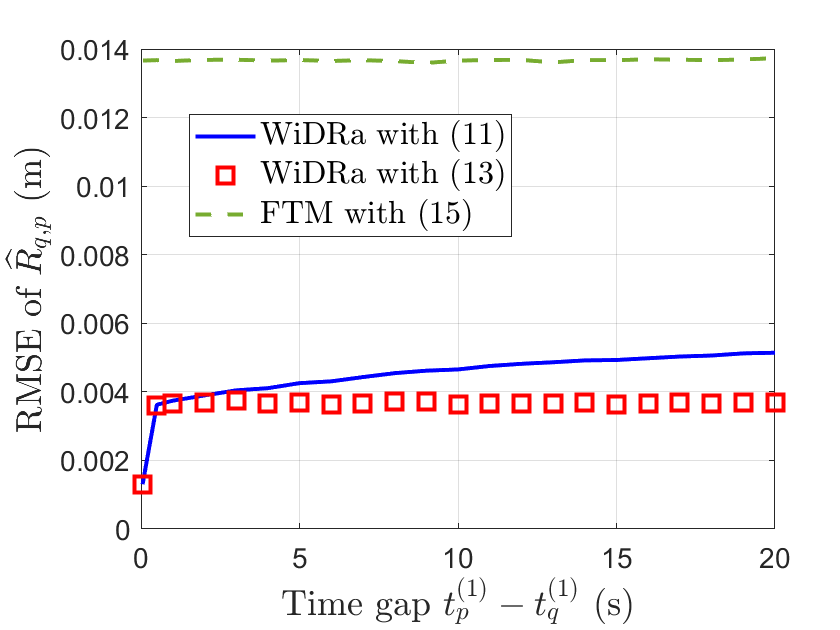} \label{Fig_RMSE_vs_lags}} \\
\subfloat[Probability Density Function (PDF)]{\includegraphics[width= 0.4\textwidth]{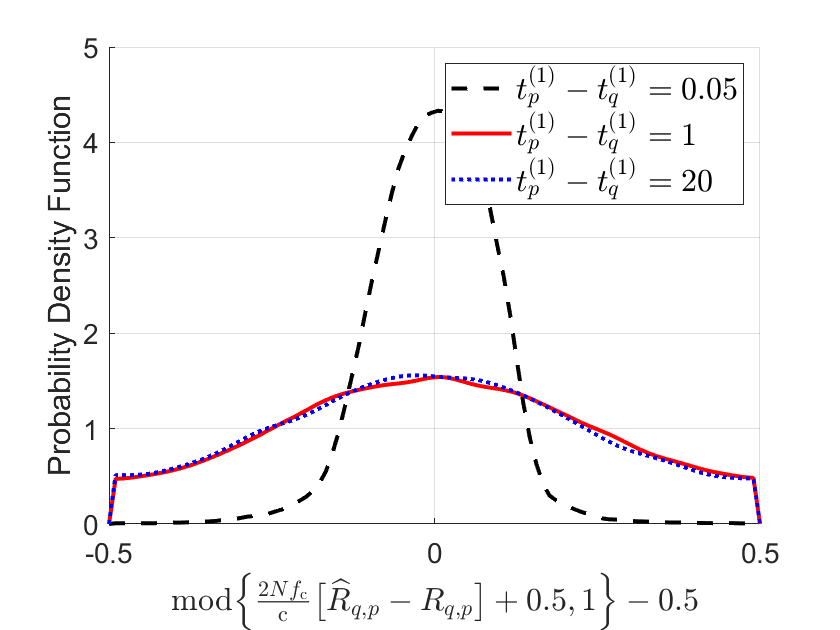} \label{Fig_PDF_vs_lags}} 
\caption{Statistics of sum-CP based relative range estimation error as a function of time gap $T$, for ${\rm SNR}=\infty$, $S=0.1$ m/s and $\kappa=7$. (a) depicts the RMSE of the error. (b) depicts the PDF of the error (modulo $\mathrm{c} \big/ 2Nf_{\rm c}$) for WiDRa with \eqref{eqn_dRange_est}. The PDFs with \eqref{eqn_dRange_est_2} are similar.} 
\label{Fig_simulation_results_lags}
\end{figure}

Finally, we summarize in Table \ref{Table_compute_time} the average computation time of the different steps of WiDRa and compare it to the computation time of FTM \eqref{eqn_FTM_Arange_est}. As is evident, the computation time is dominated by the root-MUSIC algorithm in Section \ref{subsec_symbol_timing_est}. Thus, use of low-complexity alternatives to root-MUSIC can reduce the computation time further. 
\begin{table}[h!]
\centering
\caption{Computation time (in ms) per frame exchange.} \label{Table_compute_time}
\begin{tabular}{| p{2.8cm} | p{1.5cm} | p{2.8cm} | } 
\hline 
Methods & Mean Time ($\mu$s) & Computation Complexity \\
\hline
Estimate $\widehat{\tau}_p^{(2)}, \widehat{\tau}_p^{(4)}$ (Section \ref{subsec_symbol_timing_est}) & $950$ & $\mathrm{O}(K^3)$ \\
\hline
CFO refinement (Section \ref{subsec_CFO_est}) & $7$ & $\mathrm{O}(F)$ \\
\hline
WiDRa with \eqref{eqn_dRange_est} & $957$ & $\mathrm{O}(2K^3) + \mathrm{O}(F)$ \\
\hline
WiDRa with \eqref{eqn_dRange_est_2} & $1000$ & $\mathrm{O}(2K^3) + \mathrm{O}(F) + \mathrm{O}(S_{\rm max} |\mathcal{Q}_p|)$ \\
\hline
FTM from \eqref{eqn_FTM_Arange_est} & $950$ & $\mathrm{O}(2K^3)$ \\
\hline
\end{tabular}
\end{table}

\subsection{Real-field data} \label{subsec_real_data}
\noindent To demonstrate proof-of-concept that WiDRa can indeed work, we also depict its performance on a few real-field data sets. 
For this, we set up a test bed involving two Intel AX210 devices connected to central processing units (CPUs). During the experiments, STA 1 is stationary and STA 2 is set-up on either (a) a programmable slider or (b) a cart that is moved with respect to STA 1 as shown in Fig.~\ref{Fig_setup_slider} and \ref{Fig_setup_cart}. The average distance between the STAs is set to $0.8$ m for slider, and $3$ m for cart setup. Although this ensures a strong LoS path, measurements are conducted in a typical office workspace with many metal doors, shelves, tables and chairs which act as scatterers and generate multi-path as depicted in the layouts of Fig.~\ref{Fig_real_field_images}. 
We perform 6 experiments each of which is run for $>60$ s, which are broadly of two types:
\begin{enumerate}
\item \emph{Slow movement -} In experiments 1-2, STA 2 is set up on the slider and is moved to and fro radially with respect to STA 1, at a speed of $< 50$ mm/s. 
\item \emph{Fast movement -} In experiments 3-6, STA 2 is set up on the cart that is radially moved away from STA 1 at speeds $> 50$ mm/s. 
\end{enumerate}
For performing the frame exchanges and acquiring the corresponding time stamps and CSI, we use the PicoScenes software \cite{Zhiping2022} in the Initiator-Responder mode. 
We use only one of the $2$ antennas on each STA for the transmission/reception of frames. The time stamps and CSI $\{ t_p^{(1)}, t_p^{(4)}, \bar{h}^{(2)}_{p,k} | p \in \mathcal{P}, k \in \mathcal{K} \}$ at STA 1 and $\{ t_p^{(2)},t_p^{(3)}, \bar{h}^{(2)}_p | p \in \mathcal{P}, k \in \mathcal{K} \}$ at STA 2 are logged on the CPUs and the proposed algorithms are tested in post-processing. As shown in Fig.~\ref{Fig_CFO_impact_illustrate}, the Intel AX210 chip suffers from an arbitrary phase rotation of $N=2$. 
It should be noted that the PicoScenes has several limitations:
\begin{itemize}
\item It doesn't allow good control of the frame transmission times and so pattern in Fig.~\ref{Fig_nested_array_jitter} can't be replicated. As a work around, we perform frame exchanges at the fastest allowed rate of $t_p^{(1)} - t_{p-1}^{(1)} \approx 1$ ms and then sub-sample the data to the desired rate of $25$ ms spacing. To allow refined CFO estimation (Section \ref{subsec_CFO_est}), we do not under sample the first $20$ samples of each $W=0.5$ s window. This under-sampling process is depicted pictorially in Fig.~\ref{Fig_subsampling}. It should however be noted that the average CFO between the two devices is around $4-8$ KHz, which is still heavily under-sampled at this $1$ ms spacing making CFO refinement challenging. 
\item The shortest round-trip time it allows is $t_p^{(4)}-t_p^{(1)} \approx 0.5$ms, which leads to a significant amount of residual phase noise that impacts the estimation and can cause cycle slips (see Fig.~\ref{Fig_PhN_impact_illustrate}).
\item It occasionally experiences $>100$ ms intervals of no transmissions, which make tracking the relative range harder and can cause cycle slips.
\item It only provides transmit time stamps with $25$ ns precision. We therefore skip the comparison to FTM \eqref{eqn_FTM_Arange_est} which requires a much finer precision.
\end{itemize}
Thus the results presented below are constrained, and can be further improved with a more advanced tool. 
\begin{figure}[!htb]
\centering
\subfloat[Slider setup]{\includegraphics[width= 0.45\textwidth]{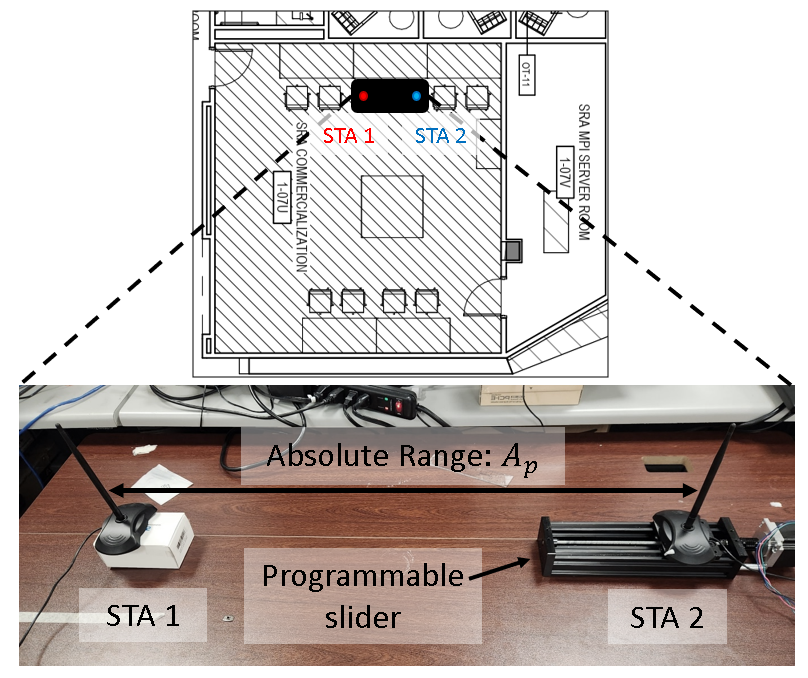} \label{Fig_setup_slider}} \\
\subfloat[Cart setup]{\includegraphics[width= 0.45\textwidth]{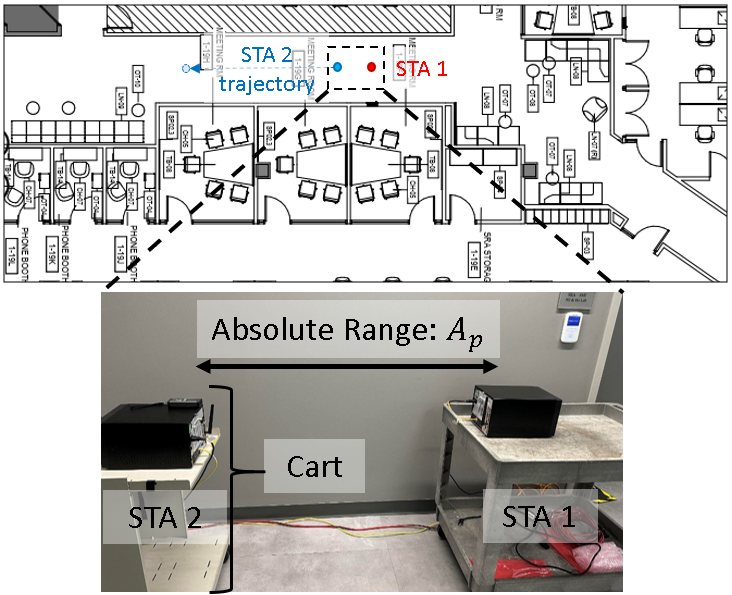} \label{Fig_setup_cart}} \\
\subfloat[Sub-sampling data]{\includegraphics[width= 0.48\textwidth]{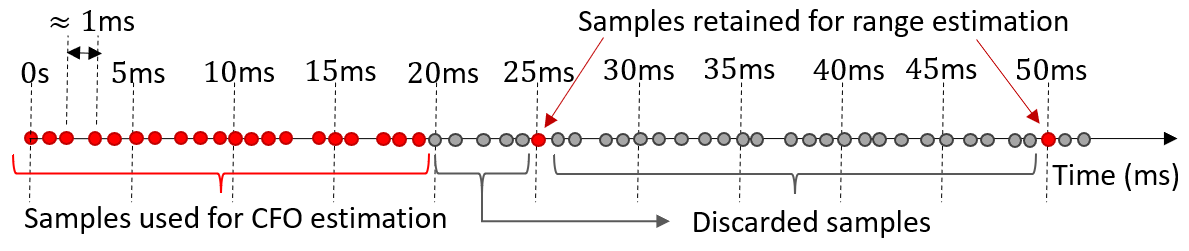} \label{Fig_subsampling}} 
\caption{Measurement setup of STA 1 and STA 2 for the real-field data with (a) slider-based movement and (b) cart-based movement. The sub-sampling of the frame exchanges to approximate the nested array pattern is depicted in (c).} 
\label{Fig_real_field_images}
\end{figure}

For each experiment, the RMSE of (i) differential range $\widehat{D}_{p}$ and (ii) the relative range $\widehat{R}_{q,p}$ for a specific time separation $t^{(1)}_p - t^{(1)}_q = T$, obtained from \eqref{eqn_dRange_est_2} are depicted in Table~\ref{Table2}. 
\begin{table}[h!]
\centering
\caption{RMSE for differential and relative range using \eqref{eqn_dRange_est_2}.} \label{Table2}
\begin{tabular}{| p{0.5cm} | p{0.6cm} | p{0.6cm} | p{0.65cm} | p{0.55cm} | p{0.6cm} | p{0.6cm} | p{0.6cm} |} 
\hline 
\multirow{3}{0.5cm}{Exp. No.} & \multirow{3}{0.6cm}{$f_{\rm c}$ MHz} & \multirow{3}{0.6cm}{BW MHz} & \multirow{3}{0.65cm}{Speed mm/s} & \multicolumn{4}{c |}{RMSE [mm]} \\
\cline{5-8}
& & & & \multirow{2}{0.55cm}{$\widehat{D}_p$} & \multicolumn{3}{c |}{$\{ \widehat{R}_{q,p} | t^{(1)}_p-t^{(1)}_q = T\}$}\\
\cline{6-8}
& & & & & $T=50$ms & $T=1$s & $T=5$s \\ 
\hline
1 & $5260$ & $20$ & $5$ & $0.2$ & $0.2$ & $0.7$ & $0.9$  \\ 
\hline
2 & $5260$ & $20$ & $30$ & $0.3$ & $0.4$ & $2.4$ & $4.2$ \\ 
\hline
3 & $5985$ & $80$ & $55$ & $0.6$ & $1.3$ & $16.5$ & $48.4$ \\ 
\hline
4 & $5985$ & $80$ & $100$ & $0.7$ & $1.6$ & $20.3$ & $53.1$ \\ 
\hline
5 & $6015$ & $20$ & $140$ & $1$ & $1.9$ & $22$ & $43.8$ \\ 
\hline
6 & $6015$ & $20$ & $350$ & $2.2$ & $5$ & $80.2$ & $269.4$ \\ 
\hline
\end{tabular}
\end{table}
From Remark \ref{remark_max_speed}, note that the maximum resolvable speed using \eqref{eqn_dRange_est_2} is $\approx 250$ mm/s for $25$ ms sampling. We observe that closer the speed of STA 2 is to this limit, the higher is the chance for cycle slips and, correspondingly, the RMSE is higher. At very low speeds, we observe that even after an interval of $5$ s, the cumulative error in relative range estimate is below $1$ cm. This is also evident from Fig.~\ref{Fig_relative_range_plot}, where we plot the relative range $\widehat{R}_{1,p}$ trajectories predicted by \eqref{eqn_dRange_est} and \eqref{eqn_dRange_est_2} for two sample experiments. Thus, the results conclusively demonstrate that differential and relative range estimation with mm-level accuracy is indeed feasible in Wi-Fi systems by tracking the carrier phase in LoS channels. 
\begin{figure}[!htb]
\centering
\subfloat[Exp. No. 2 (slider setup)]{\includegraphics[width= 0.4\textwidth]{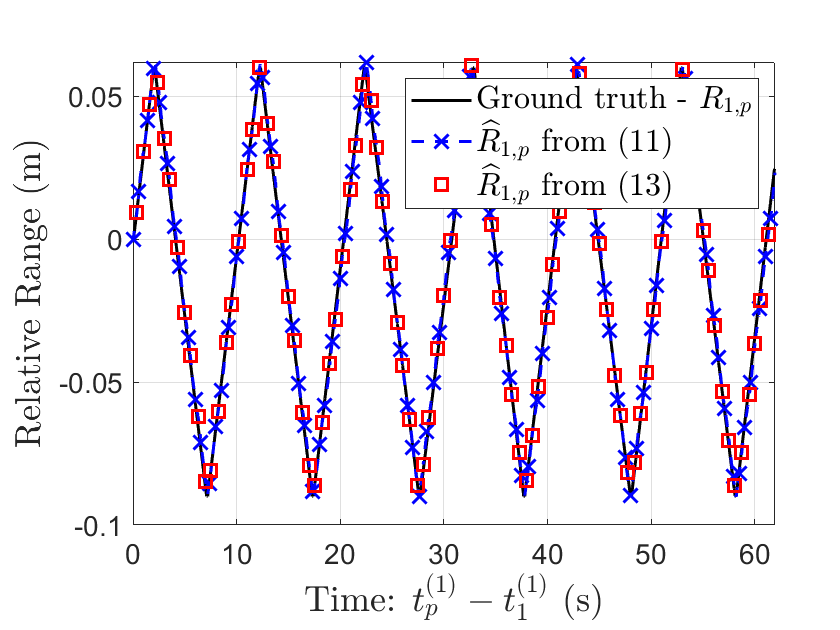} \label{Fig_rel_range_file6}} \\
\subfloat[Exp. No. 5 (cart setup)]{\includegraphics[width= 0.4\textwidth]{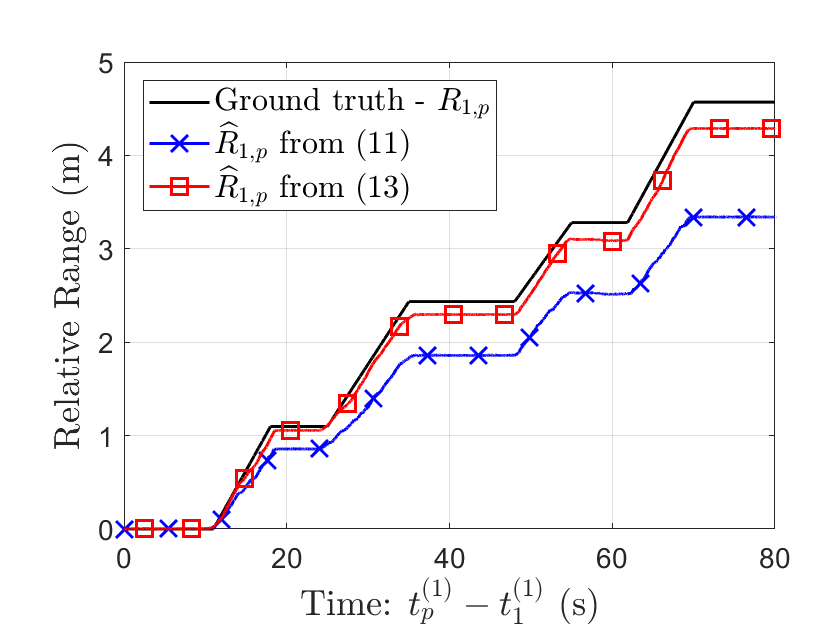} \label{Fig_rel_range_file2}}
\caption{A plot of the estimated relative range from \eqref{eqn_rel_range} compared to the ground truth.} 
\label{Fig_relative_range_plot}
\end{figure}

\section{Limitations and future directions} \label{sec_future_dir}
\noindent Although this paper introduces the potential of exploiting CP for differential ranging with Wi-Fi and provides some proof-of-concept results, it has to be emphasized that there is a vast range of topics left to investigate. Here we highlight a few of them to help stimulate research in these directions. 

\subsubsection{Dealing with multi-path channels} 
The analysis in this paper assumed an ideal LoS channel. Although the simulation results show good performance with multi-path, further work and evaluations are needed for channels with small Rician factor ($\kappa$). In particular, the use of the method in non-LoS channels is challenging, where the direct path is significantly attenuated leading to small values of $\kappa$. A simple solution is use WiDRa opportunistically when $\kappa$ is large, and to rely more on the RTT-based range estimate when $\kappa$ is small. When multiple antennas and wider transmission bandwidths are available at the STAs, one may also investigate the use of the spatial and frequency dimensions in CSI to better resolve the LoS path from the other paths when estimating $\widehat{\tau}_p^{(\bullet)}$ in Section \ref{subsec_symbol_timing_est}. This can reduce the impact of the other paths on \eqref{eqn_psi_correct_music} and improve performance even when $\kappa$ is small. 

\subsubsection{Dealing with cycle slips}
One concern with the use of CP for relative range estimation is the presence of cycle slips which can lead to accumulative error over time. Further work may be required to reduce the probability of cycle slips using, for example, FTM-based absolute range information, sub-Nyquist sampling methods \cite{Venkataramani2001}, or CSI from multiple antennas or links (in Wi-Fi 7).

\subsubsection{Frame exchange overhead and coexistence}
To enable unambiguous tracking for a maximum STA 2 speed of $1$ m/s, an average needed inter-frame exchange time at $5$ GHz is $15$ ms (for $N=1$) which leads to a very large communication overhead. The channel contention in Wi-Fi which can cause a significant delay in the channel access between adjacent frames is also another limiting factor which can reduce the maximum tracking speed. Mechanisms to reduce the required communication overhead and deal with the channel access delay is a good direction for future investigation.  

\subsubsection{Passive CP ranging}
In IEEE 802.11az standard \cite{IEEE_11az}, passive ranging protocol was introduced where STA 2 doesn't actively participate in the frame exchanges but, instead, listens to the frame exchanges between anchor STAs passively to locate itself. This method scales well with increasing number of devices. Another good direction of investigation is the expansion of the proposed method to passive ranging. 

\subsubsection{Improving absolute range estimation}
Although the focus of this paper was on obtaining the differential range, differential range estimates can also be used to improve absolute range estimates obtained from, say, FTM protocol. This is because knowledge of the differential range enables intelligent smoothing of an absolute range estimate from its neighboring absolute range estimates. This can be investigated further, leveraging the rich literature on sensor fusion \cite{Wu2019} to combine the CP-based differential range with the RTT-based absolute range. 

\subsubsection{Direction-finding and triangulation} 
Note from \eqref{eqn_Phi_diff} that, with accurate CFO estimates, difference in phase-sum $\Psi_{p}-\Psi_{p-a}$ can be interpreted as the relative change in phase of a carrier of frequency $2 f_{\rm c}$ as received at two antennas that are located at the true locations of STA 2 at times $t_p^{(1)}$ and $t_{p-a}^{(1)}$, respectively. In other words, the proposed sum-CP method enables coherent phase measurements of STA 2 as it moves, thus generating a \emph{virtual antenna array}. Therefore sum-CP measurements can potentially enable direction finding and triangulation-based localization \cite{Zamora2002, Xiong2013} with single antenna STAs, which is worth investigating. 

\section{Conclusion} \label{sec_conclusions}
\noindent This paper is the first to explore the idea of performing differential ranging between two devices using the carrier phase of un-synchronized systems like Wi-Fi. As shown from the analysis, the ranging accuracy is only limited by the carrier frequency and not the bandwidth. The appropriate metric that can track the range is the sum-CP, viz., sum of the phase of CSI measured at the carrier frequency at the two devices, after compensating for the symbol start time errors and the CFO. As shown from the results, a piece-wise fit to the CFO for every $0.5$ s provides sufficient resolution for CFO removal from the sum-CP. The analysis also suggests the use of a nested array transmission structure for the CP Request and Response frames to enable simultaneous estimation of CFO and the differential range. We also conclude that for a given frame exchange rate, the maximum device speed that can be tracked accurately is bounded, beyond which cycle slips are observed. Simulation results suggest that with $25$ ms frame exchange interval, sum-CP based differential range can outperform RTT-based differential range by $100 \times$, and can even perform well in multi-path channels with the Rician factor above $\kappa=7$. Real-world experiments validate the theoretical analysis and provide conclusive evidence that carrier phase can indeed provide millimeter-level differential ranging and relative ranging capability. The analysis also highlights the limitations of the method such as cycle slips, accommodating faster device speeds and high multi-path channels etc., and some pointers to deal with these issues are suggested.

\begin{appendices}
\section{} \label{appdix_csi_acq_frame}
\begin{proof}[Proof of Lemma \ref{Lemma_CSI_acq}]
The transmit signal from STA 1 for the $m$-th symbol of the CP Request frame $p$ can be written as 
\begin{eqnarray}
s(t) = \sum_{k \in \mathcal{K}} x_{k} e^{{\rm j} 2 \pi \frac{k}{T_{\rm s}} (t - t_p^{(1)} - m (T_{\rm s}+T_{\rm cy}))} e^{{\rm j} 2 \pi f_{\rm c} t}, \label{eqn_tx_sig_s}
\end{eqnarray}
for $-T_{\rm cy} \leq t - t_p^{(1)} - m (T_{\rm s}+T_{\rm cy}) < T_{\rm s}$, where $x_{k}$ is the pilot data on the $k$-th subcarrier of symbol $m$. Let the error at STA 2 in detection of the symbol start time for the $p$-th CP Request frame be $\tau_p^{(2)}$. Then the perceived time of reception of the $m$-th OFDM symbol at STA 2 (measured in STA 1's time reference) is 
\begin{eqnarray}
\bar{T} = t_p^{(1)} + \rho_p + \tau_p^{(2)} + m \frac{(T_{\rm s}+T_{\rm cy})}{1+ \eta_p} . 
\end{eqnarray}
Let us assume that STA 2 estimates the CSI from the $m$-th OFDM symbol of the frame.\footnote{This symbol can either by the legacy long training field or the class-specific long training field of the frame.}. The OFDM demodulation output on sub-carrier $k$ for this symbol can then be expressed as 
\begin{eqnarray}
Y_k &=& \int_{t=\bar{T}}^{\bar{T}+\frac{T_{\rm s}}{1+ \eta_p}} \frac{1}{T_{\rm s}} \Big[ s (t - \rho_p) e^{-{\rm j} 2 \pi f_c(1+\eta_p)t} e^{{\rm j} 2 \pi f_c \eta_p t_p^{(1)}} \nonumber \\
&& \times e^{-{\rm j} \phi_p} e^{{\rm j} 2 \pi \widehat{f}_{\rm CFO}(t- t_p^{(1)} - \rho_p - \tau_p^{(2)})} \nonumber \\
&& \times e^{- {\rm j} 2 \pi \frac{k}{T_{\rm s}}(1+\eta_p) (t-\bar{T})} \Big] {\rm d}t , \label{eqn_demod_Y_k}
\end{eqnarray}
where $\widehat{f}_{\rm CFO}$ is the CFO estimated at STA 2 from the received frame \cite{Sourour2004}. Assuming that STA 2 can accurately estimate the CFO value, i.e., $\widehat{f}_{{\rm CFO}} = \eta_p f_{\rm c}$, \eqref{eqn_demod_Y_k} can be further simplified as
\begin{eqnarray}
Y_k & = & \int_{t=\bar{T}}^{\bar{T}+\frac{T_{\rm s}}{1+ \eta_p}} \frac{1}{T_{\rm s}} \bigg[ s (t - \rho_p) e^{-{\rm j} 2 \pi f_c t} e^{ - {\rm j} 2 \pi \eta_p f_{\rm c} (\rho_p + \tau_p^{(2)})} \nonumber \\
&& \times e^{-{\rm j} \phi_p} e^{- {\rm j} 2 \pi \frac{k}{T_{\rm s}}(1+\eta_p) (t-\bar{T})} \bigg] {\rm d}t  \nonumber \\
&=& \int_{t=\bar{T}}^{\bar{T}+\frac{T_{\rm s}}{1+ \eta_p}} \sum_{\bar{k} \in \mathcal{K}}  \frac{x_{\bar{k}}}{T_{\rm s}} \bigg[ e^{{\rm j} 2 \pi \frac{\bar{k}}{T_{\rm s}} (t - t_p^{(1)} - \rho_p - m (T_{\rm s}+T_{\rm cy}))} e^{-{\rm j} \phi_p} \nonumber \\
&& \times e^{- {\rm j} 2 \pi f_{\rm c} [(1+\eta_p)\rho_p + \eta_p \tau_p^{(2)}]} e^{- {\rm j} 2 \pi \frac{k}{T_{\rm s}}(1+\eta_p) (t-\bar{T})} \bigg] {\rm d}t , \label{eqn_demod_Y_k_2}
\end{eqnarray}
where we use \eqref{eqn_tx_sig_s}. Using the fact that $|\eta_p m K| \ll 1$, and $|\eta_p f_{\rm c} \tau_p^{(2)}| \ll 1$ under typical scenarios, \eqref{eqn_demod_Y_k_2} can be further simplified to
\begin{eqnarray}
Y_k & \approx & \int_{t=\bar{T}}^{\bar{T}+T_{\rm s}} \sum_{\bar{k} \in \mathcal{K}}  \frac{x_{\bar{k}}}{T_{\rm s}} \bigg[ e^{{\rm j} 2 \pi \frac{\bar{k}}{T_{\rm s}} (t + \tau_p^{(2)} - \bar{T})} e^{- {\rm j} 2 \pi \frac{k}{T_{\rm s}} (t-\bar{T})} \nonumber \\
&& \times e^{- {\rm j} 2 \pi f_{\rm c} (1+\eta_p)\rho_p} e^{-{\rm j} \phi_p} \bigg] {\rm d}t \nonumber \\
&=& x_k \bigg[ e^{{\rm j} 2 \pi \frac{k}{T_{\rm s}} \tau_p^{(2)}} e^{- {\rm j} 2 \pi f_{\rm c} (1+\eta_p)\rho_p } e^{-{\rm j} \phi_p} \bigg] .
\end{eqnarray}
Dividing the right hand side by $x_k$, the result follows.
\end{proof}

\section{} \label{appdix_csi_resp_frame}
\begin{proof}[Proof of Lemma \ref{Lemma_CSI_resp}]
Let the error at STA 1 in detection of the symbol start time for the $p$-th CP Response frame be $\tau_p^{(4)}$. Then the true time of transmission of the $p$-th CP Response frame (measured in STA 1's time reference) is $t_p^{(4)} - \rho_p - \Delta\rho - \tau_p^{(4)}$. Then the transmit signal from STA 2 for the $m$-th symbol of the CP Response frame $p$ can be written as
\begin{eqnarray} \label{eqn_tx_sig_s_alt}
s(t) &=& \sum_{k \in \mathcal{K}} x_{k} e^{{\rm j} 2 \pi \frac{(1+\eta_p)k}{T_{\rm s}} \left(t - t_p^{(4)} + \rho_p + \Delta\rho + \tau_p^{(4)} - m \frac{(T_{\rm s}+T_{\rm cy})}{1+\eta_p} \right)} \nonumber \\
&& \times e^{{\rm j} 2 \pi f_c(1+\eta_p)t} e^{-{\rm j} 2 \pi f_c \eta_p t_p^{(1)}} e^{{\rm j} \phi_p},
\end{eqnarray}
for $$-T_{\rm cy} \leq t - t_p^{(4)} + \rho_p + \Delta\rho + \tau_p^{(4)} - m \frac{(T_{\rm s}+T_{\rm cy})}{1+\eta_p} < T_{\rm s}, $$ where $x_{k}$ is the pilot data on the $k$-th subcarrier of symbol $m$. Let us assume that STA 1 estimates the CSI from the $m$-th OFDM symbol of the frame.\footnote{This symbol can either by the legacy long training field or the class-specific long training field of the frame.}. Note that the perceived time of reception of the $m$-th OFDM symbol at STA 1 (measured in STA 1's time reference) is $\bar{T} = t_p^{(4)} + m(T_{\rm s} + T_{\rm cy})$. The OFDM demodulation output on sub-carrier $k$ for this symbol can then be expressed as 
\begin{eqnarray}
Y_k &=& \int_{t=\bar{T}}^{\bar{T}+T_{\rm s}} \frac{1}{T_{\rm s}} \bigg[ s (t - \rho_p - \Delta\rho) e^{-{\rm j} 2 \pi f_c t} \nonumber \\
&& \times e^{{\rm j} 2 \pi \widehat{f}_{\rm CFO}(t- t_p^{(4)})} e^{- {\rm j} 2 \pi \frac{k}{T_{\rm s}} (t-\bar{T})} \bigg] {\rm d}t , \label{eqn_demod_Y_k_alt}
\end{eqnarray}
where $\widehat{f}_{\rm CFO}$ is the CFO estimated at STA 1 from the received frame \cite{Sourour2004}. Assuming that STA 1 can accurately estimate the CFO value, i.e., $\widehat{f}_{{\rm CFO}} = -\eta_p f_{\rm c}$, \eqref{eqn_demod_Y_k_alt} can be further simplified as 
\begin{eqnarray}
Y_k &=& \int_{t=\bar{T}}^{\bar{T}+T_{\rm s}} \sum_{\bar{k} \in \mathcal{K}} \frac{x_k}{T_{\rm s}} \bigg[ e^{{\rm j} 2 \pi \frac{(1+\eta_p)\bar{k}}{T_{\rm s}} \left(t - t_p^{(4)} + \tau_p^{(4)} - m \frac{(T_{\rm s}+T_{\rm cy})}{1+\eta_p} \right)} \nonumber \\
&& \times e^{-{\rm j} 2 \pi f_{\rm c}(1+\eta_p) (\rho_p+\Delta\rho)} e^{- {\rm j} 2 \pi \eta_p f_{\rm c} t_p^{(1)}} \nonumber \\
&& \times e^{{\rm j} \phi_p} e^{{\rm j} 2 \pi \eta_p f_{\rm c} t_p^{(4)}} e^{- {\rm j} 2 \pi \frac{k}{T_{\rm s}} (t-\bar{T})} \bigg] {\rm d}t , \label{eqn_demod_Y_k_2_alt}
\end{eqnarray}
where we use \eqref{eqn_tx_sig_s_alt}. Using the fact that $|\eta_p m K| \ll 1$, \eqref{eqn_demod_Y_k_2_alt} can be further simplified to 
\begin{eqnarray}
Y_k & \approx & \int_{t=\bar{T}}^{\bar{T}+T_{\rm s}} \sum_{\bar{k} \in \mathcal{K}} \frac{x_k}{T_{\rm s}} \bigg[ e^{{\rm j} 2 \pi \frac{\bar{k}}{T_{\rm s}} \left(t - \bar{T} + \tau_p^{(4)} \right)} e^{{\rm j} 2 \pi \eta_p f_{\rm c} [t_p^{(4)}-t_p^{(1)}]} \nonumber \\
&& \times e^{-{\rm j} 2 \pi f_{\rm c}(1+\eta_p) (\rho_p+\Delta\rho)} e^{{\rm j} \phi_p} e^{- {\rm j} 2 \pi \frac{k}{T_{\rm s}} (t-\bar{T})} \bigg] {\rm d}t , \nonumber \\
&=& x_k \bigg[ e^{{\rm j} 2 \pi \frac{k}{T_{\rm s}} \tau_p^{(4)}} e^{- {\rm j} 2 \pi f_{\rm c} (1+\eta_p)(\rho_p+\Delta\rho) } \nonumber \\
&& \times e^{{\rm j} \phi_p} e^{{\rm j} 2 \pi \eta_p f_{\rm c} [t_p^{(4)}-t_p^{(1)}]} \bigg] .
\end{eqnarray}
Dividing the right hand side by $x_k$, the result follows.
\end{proof}

\end{appendices}




%

\bibliographystyle{IEEEtran}
\bibliography{IEEEabrv, references_rev2_abrev}

\begin{thebibliography}{10}
\providecommand{\url}[1]{#1}
\csname url@samestyle\endcsname
\providecommand{\newblock}{\relax}
\providecommand{\bibinfo}[2]{#2}
\providecommand{\BIBentrySTDinterwordspacing}{\spaceskip=0pt\relax}
\providecommand{\BIBentryALTinterwordstretchfactor}{4}
\providecommand{\BIBentryALTinterwordspacing}{\spaceskip=\fontdimen2\font plus
\BIBentryALTinterwordstretchfactor\fontdimen3\font minus
  \fontdimen4\font\relax}
\providecommand{\BIBforeignlanguage}[2]{{%
\expandafter\ifx\csname l@#1\endcsname\relax
\typeout{** WARNING: IEEEtran.bst: No hyphenation pattern has been}%
\typeout{** loaded for the language `#1'. Using the pattern for}%
\typeout{** the default language instead.}%
\else
\language=\csname l@#1\endcsname
\fi
#2}}
\providecommand{\BIBdecl}{\relax}
\BIBdecl

\bibitem{Zafari2019}
F.~Zafari, A.~Gkelias, and K.~K. Leung, ``A survey of indoor localization
  systems and technologies,'' \emph{IEEE Communications Surveys \& Tutorials},
  vol.~21, no.~3, pp. 2568--2599, Apr. 2019.

\bibitem{IEEE_154z}
``{IEEE} standard for low-rate wireless networks--amendment 1: Enhanced ultra
  wideband {(UWB)} physical layers {(PHYs)} and associated ranging
  techniques,'' \emph{IEEE Std 802.15.4z-2020 (Amendment to IEEE Std
  802.15.4-2020)}, pp. 1--174, Aug. 2020.

\bibitem{Royo2019}
S.~Royo and M.~Ballesta-Garcia, ``An overview of {Lidar} imaging systems for
  autonomous vehicles,'' \emph{Applied Sciences}, vol.~9, no.~19, Sep. 2019.

\bibitem{Misra2011}
P.~Misra and P.~Enge, \emph{Global Positioning System: Signals, Measurements,
  and Performance (Revised Second Edition)}.\hskip 1em plus 0.5em minus
  0.4em\relax Ganga-Jamuna Press, Dec. 2010.

\bibitem{Gabbrielli2023}
A.~Gabbrielli, J.~Bordoy, W.~Xiong, G.~K.~J. Fischer, T.~Schaechtle,
  J.~Wendeberg, F.~Höflinger, C.~Schindelhauer, and S.~J. Rupitsch, ``{RAILS:
  3-D} real-time angle of arrival ultrasonic indoor localization system,''
  \emph{IEEE Transactions on Instrumentation and Measurement}, vol.~72, pp.
  1--15, Nov. 2022.

\bibitem{Zand2019}
P.~Zand, J.~Romme, J.~Govers, F.~Pasveer, and G.~Dolmans, ``A high-accuracy
  phase-based ranging solution with bluetooth low energy {(BLE)},'' in
  \emph{IEEE Wireless Communications and Networking Conference (WCNC)}, Apr.
  2019, pp. 1--8.

\bibitem{Giovanni2021}
G.~Pau, F.~Arena, Y.~E. Gebremariam, and I.~You, ``Bluetooth 5.1: An analysis
  of direction finding capability for high-precision location services,''
  \emph{Sensors}, vol.~21, no.~11, May 2021.

\bibitem{IEEE_154}
``{IEEE} standard for low-rate wireless networks,'' \emph{IEEE Std
  802.15.4-2020 (Revision of IEEE Std 802.15.4-2015)}, pp. 1--800, July 2020.

\bibitem{IEEEWiFi2020}
``{IEEE} standard for information technology--telecommunications and
  information exchange between systems - local and metropolitan area
  networks--specific requirements - part 11: Wireless {LAN} medium access
  control {(MAC)} and physical layer {(PHY)} specifications,'' \emph{IEEE Std
  802.11-2020 (Revision of IEEE Std 802.11-2016)}, pp. 1--4379, Feb. 2021.

\bibitem{IEEE_11az}
``{IEEE} standard for information technology--telecommunications and
  information exchange between systems local and metropolitan area
  networks--specific requirements part 11: Wireless {LAN} medium access control
  {(MAC)} and physical layer {(PHY)} specifications amendment 4: Enhancements
  for positioning,'' \emph{IEEE Std 802.11az-2022 (Amendment to IEEE Std
  802.11-2020 as amended by IEEE Std 802.11ax-2021, IEEE Std 802.11ay-2021,
  IEEE Std 802.11ba-2021, and IEEE Std 802.11-2020/Cor 1-2022)}, pp. 1--248,
  Mar. 2023.

\bibitem{LiuFen2020}
F.~Liu, J.~Liu, Y.~Yin, W.~Wang, D.~Hu, P.~Chen, and Q.~Niu, ``Survey on
  {WiFi}-based indoor positioning techniques,'' \emph{IET Communications},
  vol.~14, no.~9, pp. 1372--1383, June 2020.

\bibitem{Mollyn2023}
V.~Mollyn, R.~Arakawa, M.~Goel, C.~Harrison, and K.~Ahuja, ``{IMUPoser}:
  Full-body pose estimation using {IMUs} in phones, watches, and earbuds,'' in
  \emph{Proceedings of the 2023 CHI Conference on Human Factors in Computing
  Systems}, ser. CHI '23.\hskip 1em plus 0.5em minus 0.4em\relax New York, NY,
  USA: Association for Computing Machinery, Apr. 2023.

\bibitem{Thorbjornsen_2010}
B.~{Thorbjornsen}, N.~M. {White}, A.~D. {Brown}, and J.~S. {Reeve}, ``Radio
  frequency {(RF)} time-of-flight ranging for wireless sensor networks,''
  \emph{Measurement Science and Technology}, vol.~21, no.~3, Mar. 2010.

\bibitem{Ibrahim2018}
M.~Ibrahim, H.~Liu, M.~Jawahar, V.~Nguyen, M.~Gruteser, R.~Howard, B.~Yu, and
  F.~Bai, ``Verification: Accuracy evaluation of {WiFi} fine time measurements
  on an open platform,'' in \emph{Proceedings of the 24th Annual International
  Conference on Mobile Computing and Networking}, ser. MobiCom '18.\hskip 1em
  plus 0.5em minus 0.4em\relax New York, NY, USA: Association for Computing
  Machinery, Oct. 2018, p. 417–427.

\bibitem{Ma2022}
C.~Ma, B.~Wu, S.~Poslad, and D.~R. Selviah, ``{Wi-Fi RTT} ranging performance
  characterization and positioning system design,'' \emph{IEEE Transactions on
  Mobile Computing}, vol.~21, no.~2, pp. 740--756, Feb. 2022.

\bibitem{Remondi1985}
B.~W. Remondi, ``Global positioning system carrier phase: Description and
  use,'' \emph{Bulletin g{\'{e}}od{\'{e}}sique}, vol.~59, no.~4, pp. 361--377,
  Dec. 1985.

\bibitem{Yang2017}
C.~Yang, L.~Chen, O.~Julien, A.~Soloviev, and R.~Chen, ``Carrier phase tracking
  of {OFDM}-based {DVB-T} signals for precision ranging,'' \emph{Proceedings of
  the 30th International Technical Meeting of the Satellite Division of The
  Institute of Navigation (ION GNSS)}, pp. 736--748, Sep. 2017.

\bibitem{Dun2020}
H.~Dun, C.~C. J.~M. Tiberius, and G.~J.~M. Janssen, ``Positioning in a
  multipath channel using {OFDM} signals with carrier phase tracking,''
  \emph{IEEE Access}, vol.~8, pp. 13\,011--13\,028, Jan. 2020.

\bibitem{Chen2022}
L.~Chen, X.~Zhou, F.~Chen, L.-L. Yang, and R.~Chen, ``Carrier phase ranging for
  indoor positioning with {5G NR} signals,'' \emph{IEEE Internet of Things
  Journal}, vol.~9, no.~13, pp. 10\,908--10\,919, Nov. 2022.

\bibitem{Paul2009}
A.~S. Paul and E.~A. Wan, ``{RSSI}-based indoor localization and tracking using
  sigma-point kalman smoothers,'' \emph{IEEE Journal of Selected Topics in
  Signal Processing}, vol.~3, no.~5, pp. 860--873, Oct. 2009.

\bibitem{Xia2017}
S.~Xia, Y.~Liu, G.~Yuan, M.~Zhu, and Z.~Wang, ``Indoor fingerprint positioning
  based on {Wi-Fi}: An overview,'' \emph{ISPRS International Journal of
  Geo-Information}, vol.~6, no.~5, Apr. 2017.

\bibitem{Singh2021}
N.~Singh, S.~Choe, and R.~Punmiya, ``Machine learning based indoor localization
  using {Wi-Fi RSSI} fingerprints: An overview,'' \emph{IEEE Access}, vol.~9,
  pp. 127\,150--127\,174, Sep. 2021.

\bibitem{Yang2013}
Z.~Yang, Z.~Zhou, and Y.~Liu, ``From {RSSI to CSI}: Indoor localization via
  channel response,'' \emph{ACM Comput. Surv.}, vol.~46, no.~2, Dec. 2013.

\bibitem{Rocamora2020}
J.~M. Rocamora, I.~W.-H. Ho, M.~Mak, and A.~P.-T. Lau, ``Survey of {CSI}
  fingerprinting-based indoor positioning and mobility tracking systems,''
  \emph{IET Signal Processing}, vol.~14, no.~7, pp. 407--419, Sep. 2020.

\bibitem{Kotaru2015}
M.~Kotaru, K.~Joshi, D.~Bharadia, and S.~Katti, ``{SpotFi:} decimeter level
  localization using {WiFi},'' \emph{SIGCOMM Comput. Commun. Rev.}, vol.~45,
  no.~4, p. 269–282, Aug. 2015.

\bibitem{Sheikh2023}
A.~Sheikh, J.~Romme, J.~Govers, A.~Farsaei, and C.~Bachmann, ``Phase-based
  ranging in narrowband systems with missing/interfered tones,'' \emph{IEEE
  Internet of Things Journal}, vol.~10, no.~17, pp. 15\,171--15\,185, Sep.
  2023.

\bibitem{Vasisht2016}
D.~Vasisht, S.~Kumar, and D.~Katabi, ``{Decimeter-Level} localization with a
  single {WiFi} access point,'' in \emph{USENIX Symposium on Networked Systems
  Design and Implementation (NSDI)}.\hskip 1em plus 0.5em minus 0.4em\relax
  Santa Clara, CA: USENIX Association, Mar. 2016, pp. 165--178.

\bibitem{Evgeny2020}
E.~Khorov, I.~Levitsky, and I.~F. Akyildiz, ``Current status and directions of
  {IEEE 802.11be}, the future {Wi-Fi 7},'' \emph{IEEE Access}, vol.~8, pp.
  88\,664--88\,688, May 2020.

\bibitem{Li2000}
X.~Li, K.~Pahlavan, M.~Latva-aho, and M.~Ylianttila, ``Comparison of indoor
  geolocation methods in {DSSS} and {OFDM} wireless {LAN} systems,'' in
  \emph{Vehicular Technology Conference Fall (VTC2000)}, vol.~6, Sep. 2000, pp.
  3015--3020.

\bibitem{Marcaletti2014}
A.~Marcaletti, M.~Rea, D.~Giustiniano, V.~Lenders, and A.~Fakhreddine,
  ``Filtering noisy 802.11 time-of-flight ranging measurements,'' in
  \emph{Proceedings of the 10th ACM International on Conference on Emerging
  Networking Experiments and Technologies}, ser. CoNEXT '14.\hskip 1em plus
  0.5em minus 0.4em\relax New York, NY, USA: Association for Computing
  Machinery, Dec. 2014, p. 13–20.

\bibitem{Rea2017}
M.~Rea, A.~Fakhreddine, D.~Giustiniano, and V.~Lenders, ``Filtering noisy
  802.11 time-of-flight ranging measurements from commoditized {WiFi} radios,''
  \emph{IEEE/ACM Transactions on Networking}, vol.~25, no.~4, pp. 2514--2527,
  Aug. 2017.

\bibitem{Banin2019}
L.~Banin, O.~Bar-Shalom, N.~Dvorecki, and Y.~Amizur, ``Scalable {Wi-Fi} client
  self-positioning using cooperative {FTM}-sensors,'' \emph{IEEE Transactions
  on Instrumentation and Measurement}, vol.~68, no.~10, pp. 3686--3698, Oct.
  2019.

\bibitem{Kevin2020}
K.~Jiokeng, G.~Jakllari, A.~Tchana, and A.-L. Beylot, ``When {FTM} discovered
  {MUSIC}: Accurate {WiFi}-based ranging in the presence of multipath,'' in
  \emph{IEEE Conference on Computer Communications (INFOCOM)}, July 2020, pp.
  1857--1866.

\bibitem{Woodman2008}
O.~Woodman and R.~Harle, ``Pedestrian localisation for indoor environments,''
  in \emph{Proceedings of the 10th International Conference on Ubiquitous
  Computing}, ser. UbiComp '08.\hskip 1em plus 0.5em minus 0.4em\relax New
  York, NY, USA: Association for Computing Machinery, Sep. 2008, p. 114–123.

\bibitem{Poulose2019}
A.~Poulose, J.~Kim, and D.~S. Han, ``A sensor fusion framework for indoor
  localization using smartphone sensors and {Wi-Fi} {RSSI} measurements,''
  \emph{Applied Sciences}, vol.~9, no.~20, Oct. 2019.

\bibitem{Yu2019}
Y.~Yu, R.~Chen, L.~Chen, G.~Guo, F.~Ye, and Z.~Liu, ``A robust dead reckoning
  algorithm based on {Wi-Fi} {FTM} and multiple sensors,'' \emph{Remote
  Sensing}, vol.~11, no.~5, Mar. 2019.

\bibitem{LiuXu2021}
X.~Liu, B.~Zhou, P.~Huang, W.~Xue, Q.~Li, J.~Zhu, and L.~Qiu, ``Kalman
  filter-based data fusion of {Wi-Fi RTT} and {PDR} for indoor localization,''
  \emph{IEEE Sensors Journal}, vol.~21, no.~6, pp. 8479--8490, Jan. 2021.

\bibitem{Guo2022}
G.~Guo, R.~Chen, F.~Ye, Z.~Liu, S.~Xu, L.~Huang, Z.~Li, and L.~Qian, ``A robust
  integration platform of {Wi-Fi RTT}, {RSS} signal, and {MEMS-IMU} for
  locating commercial smartphone indoors,'' \emph{IEEE Internet of Things
  Journal}, vol.~9, no.~17, pp. 16\,322--16\,331, Sep. 2022.

\bibitem{Sourour2004}
E.~Sourour, H.~El-Ghoroury, and D.~McNeill, ``Frequency offset estimation and
  correction in the {IEEE 802.11a WLAN},'' in \emph{IEEE 60th Vehicular
  Technology Conference (VTC)}, vol.~7, Sep. 2004, pp. 4923--4927 Vol. 7.

\bibitem{Ratnam2019}
V.~V. Ratnam and A.~F. Molisch, ``Continuous analog channel estimation-aided
  beamforming for massive {MIMO} systems,'' \emph{IEEE Transactions on Wireless
  Communications}, vol.~18, no.~12, pp. 5557--5570, Dec. 2019.

\bibitem{Ratnam2020}
V.~V. Ratnam, ``Performance of analog beamforming systems with optimized phase
  noise compensation,'' \emph{IEEE Transactions on Signal Processing}, vol.~68,
  pp. 5334--5348, Sep 2020.

\bibitem{Ratnam2023}
V.~V. Ratnam, H.~Chen, H.~H. Chang, A.~Sehgal, and J.~Zhang, ``Optimal
  preprocessing of {WiFi} {CSI} for sensing applications,'' \emph{IEEE
  Transactions on Wireless Communications}, pp. 1--14, Mar. 2024.

\bibitem{Zhang2020}
D.~Zhang, Y.~Hu, Y.~Chen, and B.~Zeng, ``Calibrating phase offsets for
  commodity {WiFi},'' \emph{IEEE Systems Journal}, vol.~14, no.~1, pp.
  661--664, Mar. 2020.

\bibitem{Zhuo2017}
Y.~Zhuo, H.~Zhu, H.~Xue, and S.~Chang, ``Perceiving accurate {CSI} phases with
  commodity {WiFi} devices,'' in \emph{IEEE Conference on Computer
  Communications (INFOCOM)}, May 2017, pp. 1--9.

\bibitem{Zubow2021}
A.~Zubow, P.~Gawłowicz, and F.~Dressler, ``On phase offsets of 802.11 ac
  commodity {WiFi},'' in \emph{16th Annual Conference on Wireless On-demand
  Network Systems and Services Conference (WONS)}, Mar. 2021, pp. 1--4.

\bibitem{Rao1989}
B.~Rao and K.~Hari, ``Performance analysis of {Root-Music},'' \emph{IEEE
  Transactions on Acoustics, Speech, and Signal Processing}, vol.~37, no.~12,
  pp. 1939--1949, Dec. 1989.

\bibitem{Welch1995}
G.~Welch, G.~Bishop \emph{et~al.}, ``An introduction to the kalman filter,''
  Nov. 1995.

\bibitem{Hyytia2007}
E.~Hyyti{\"a} and J.~Virtamo, ``Random waypoint mobility model in cellular
  networks,'' \emph{Wireless Networks}, vol.~13, no.~2, pp. 177--188, Apr.
  2007.

\bibitem{IEEEWiFi_11ax}
``{IEEE} standard for information technology--telecommunications and
  information exchange between systems local and metropolitan area
  networks--specific requirements part 11: Wireless {LAN} medium access control
  {(MAC)} and physical layer {(PHY)} specifications amendment 1: Enhancements
  for high-efficiency {WLAN},'' \emph{IEEE Std 802.11ax-2021 (Amendment to IEEE
  Std 802.11-2020)}, pp. 1--767, May 2021.

\bibitem{Zhiping2022}
Z.~Jiang, T.~H. Luan, X.~Ren, D.~Lv, H.~Hao, J.~Wang, K.~Zhao, W.~Xi, Y.~Xu,
  and R.~Li, ``Eliminating the barriers: Demystifying {Wi-Fi} baseband design
  and introducing the picoscenes {Wi-Fi} sensing platform,'' \emph{IEEE
  Internet of Things Journal}, vol.~9, no.~6, pp. 4476--4496, Mar. 2022.

\bibitem{Venkataramani2001}
R.~Venkataramani and Y.~Bresler, ``Optimal sub-{Nyquist} nonuniform sampling
  and reconstruction for multiband signals,'' \emph{IEEE Transactions on Signal
  Processing}, vol.~49, no.~10, pp. 2301--2313, Oct. 2001.

\bibitem{Wu2019}
Y.~Wu, H.-B. Zhu, Q.-X. Du, and S.-M. Tang, ``A survey of the research status
  of pedestrian dead reckoning systems based on inertial sensors,''
  \emph{International Journal of Automation and Computing}, vol.~16, no.~1, pp.
  65--83, Feb. 2019.

\bibitem{Zamora2002}
A.~Pages-Zamora, J.~Vidal, and D.~Brooks, ``Closed-form solution for
  positioning based on angle of arrival measurements,'' in \emph{13th IEEE
  International Symposium on Personal, Indoor and Mobile Radio Communications},
  vol.~4, Sep. 2002, pp. 1522--1526.

\bibitem{Xiong2013}
J.~Xiong and K.~Jamieson, ``{ArrayTrack}: A fine-grained indoor location
  system,'' in \emph{Proceedings of the 10th USENIX Symposium on Networked
  Systems Design and Implementation}.\hskip 1em plus 0.5em minus 0.4em\relax
  Lombard, IL: USENIX Association, Apr. 2013, pp. 71--84.

\end{thebibliography}

\begin{IEEEbiography}[{\includegraphics[width=1.1in,height=1.25in,clip,keepaspectratio]{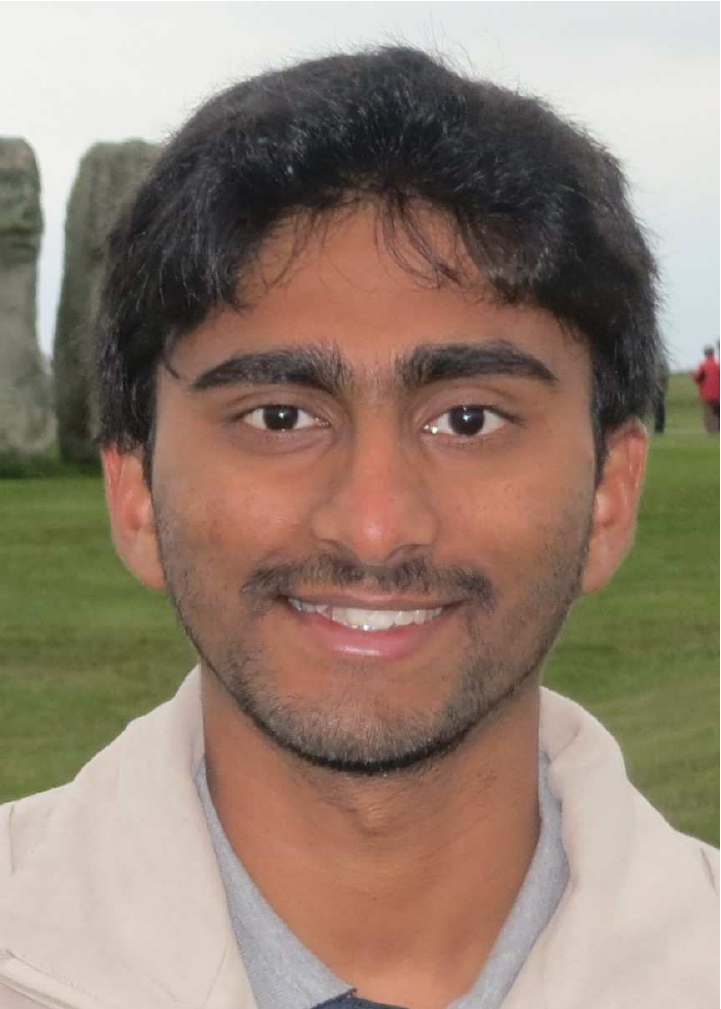}}]{Vishnu V. Ratnam} (S'10--M'19--SM'22) received the B.Tech. degree (Hons.) in electronics and electrical communication engineering from IIT Kharagpur, Kharagpur, India in 2012, where he graduated as the Salutatorian for the class of 2012. He received the Ph.D. degree in electrical engineering from University of Southern California, Los Angeles, CA, USA in 2018. He currently works as a Staff Research Engineer II in the Standards and Mobility Innovation Lab at Samsung Research America, Plano, Texas, USA. His research interests are in Wi-Fi standards, wireless sensing, AI for wireless, mm-Wave and Terahertz communication. 
Dr. Ratnam was the recipient of the Best Student Paper Award with the IEEE International Conference on Ubiquitous Wireless Broadband (ICUWB) in 2016, the Bigyan Sinha memorial award in 2012 and is a member of the Phi-Kappa-Phi honor society.
\end{IEEEbiography}

\begin{IEEEbiography}[{\includegraphics[width=1.1in,height=1.25in,clip,keepaspectratio]{ 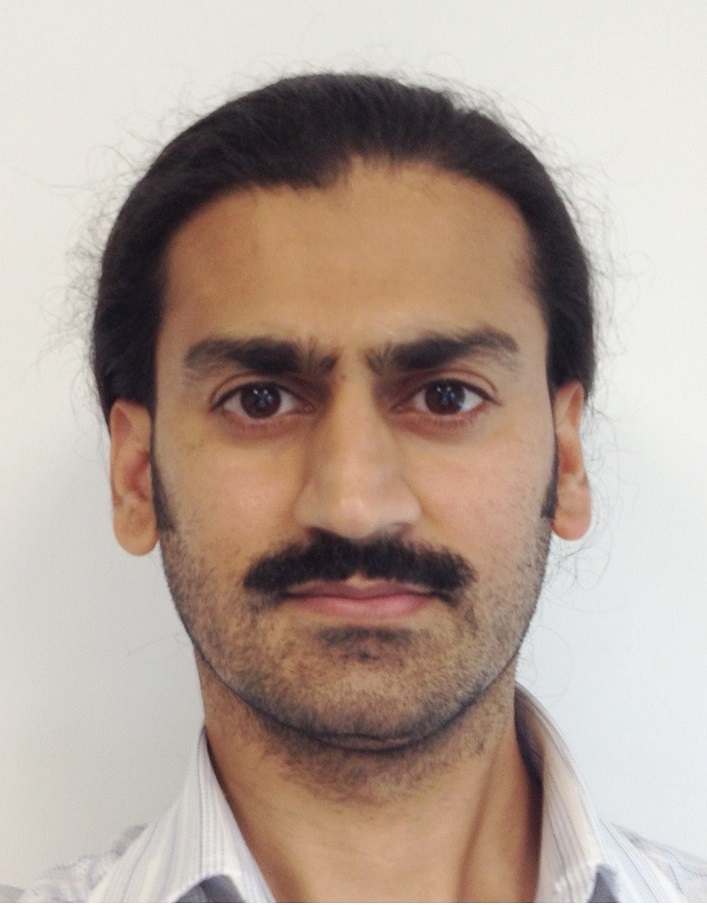}}]{Bilal Sadiq} received his Ph.D. degree in electrical and computer engineering from The University of Texas at Austin in 2010. He is currently with Standards and Mobility Innovation Laboratory, Samsung Research America. His research includes applications of network asymptotics, queueing theory, signal processing, optimization theory, and ML in wireless communication systems. Dr. Sadiq was the recipient of Best Paper Award at ITC-22 in September 2010.
\end{IEEEbiography}

\begin{IEEEbiography}[{\includegraphics[width=1in,height=1.25in,clip,keepaspectratio]{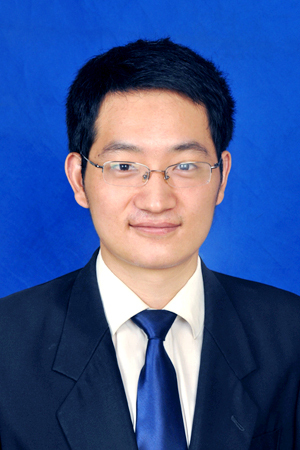}}]{Hao Chen} received his B.S. and M.S. degrees in Information Engineering from Xi'an Jiaotong University, Shaanxi, in 2010 and 2013. He received the Ph.D. degree in Electrical Engineering from University of Kansas, Lawrence, KS, in 2017. He currently works as a Senior Staff Engineer with the Standards and Mobility Innovation Laboratory, Samsung Research America, where he is working on algorithm design and prototyping of AI for wireless communication, wireless sensing, and localization. His research interests include network optimization, machine learning, and 5G cellular systems.
\end{IEEEbiography}

\begin{IEEEbiography}[{\includegraphics[width=1in,height=1.25in,clip,keepaspectratio]{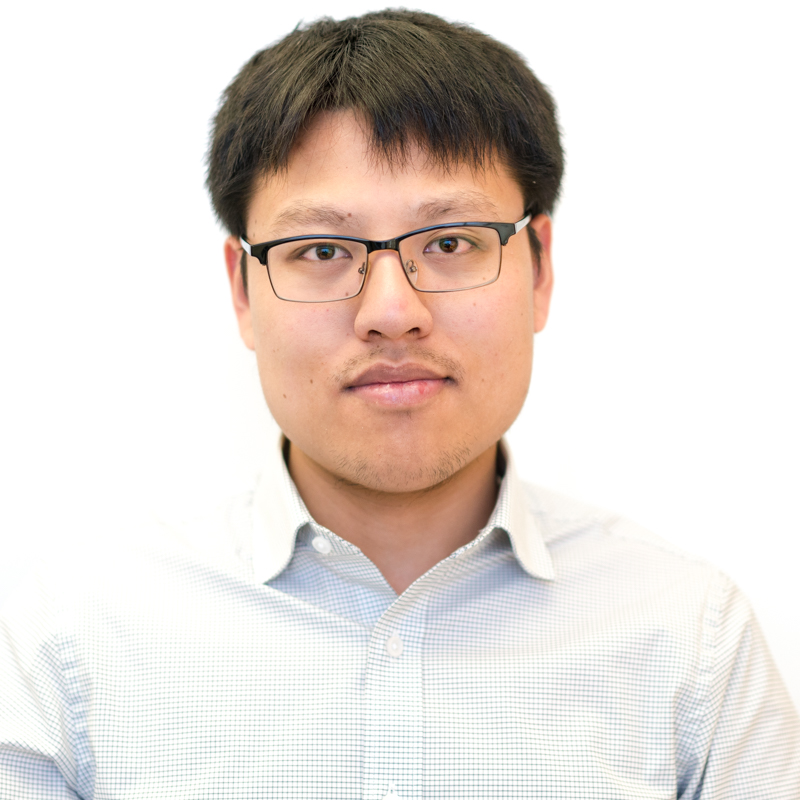}}]{Wei Sun} received his B.S. degrees in Computer Science from Shanghai Jiao Tong University, Shanghai, in 2017. He received the Ph.D. degree in from University of Texas at Austin, Austin, TX, in 2022. He currently works as a Senior Research Engineer with the Standards and Mobility Innovation Laboratory, Samsung Research America. He is working on algorithm design and prototyping of AI for wireless, wireless sensing and localization, audio recognition and detection. His research interests include foundation models and multi-modality modelling. 
\end{IEEEbiography}

\begin{IEEEbiography}[{\includegraphics[width=1in,height=1.25in,clip,keepaspectratio]{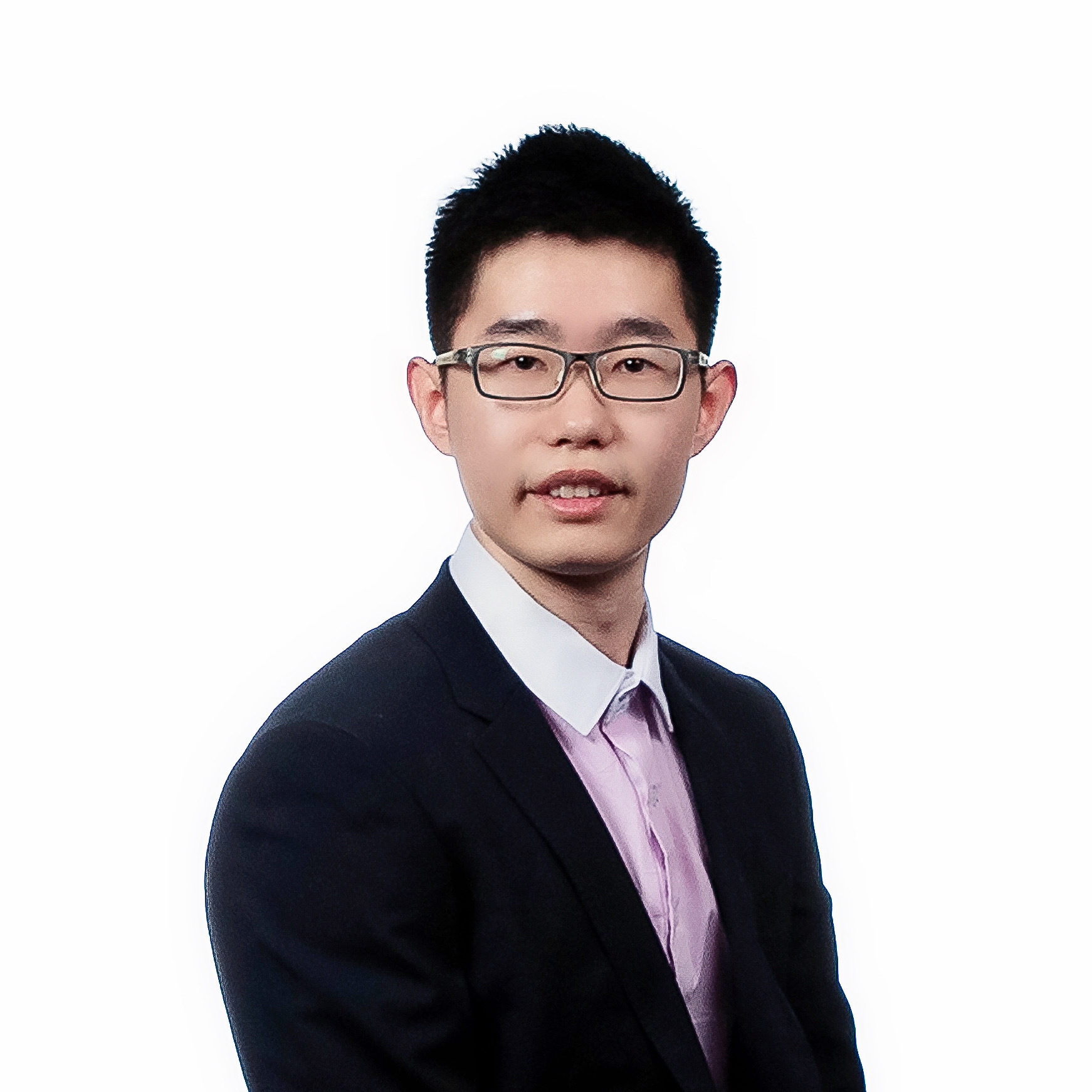}}]{Shunyao Wu} received his B.S. degrees in Electrical Engineering from Huazhong University of Science and Technology, Wuhan, in 2015. He received the M.S. and Ph.D. degree in Electrical Engineering from Arizona State University, Tempe, AZ, in 2017 and 2022. He currently works as a Senior Engineer with the Standards and Mobility Innovation Laboratory, Samsung Research America, where he is working on algorithm design and prototyping of modem system. His research interests include machine learning, 5G cellular systems algorithm design and hardware implementation.
\end{IEEEbiography}

\begin{IEEEbiography}[{\includegraphics[width=1in,height=1.25in,clip,keepaspectratio]{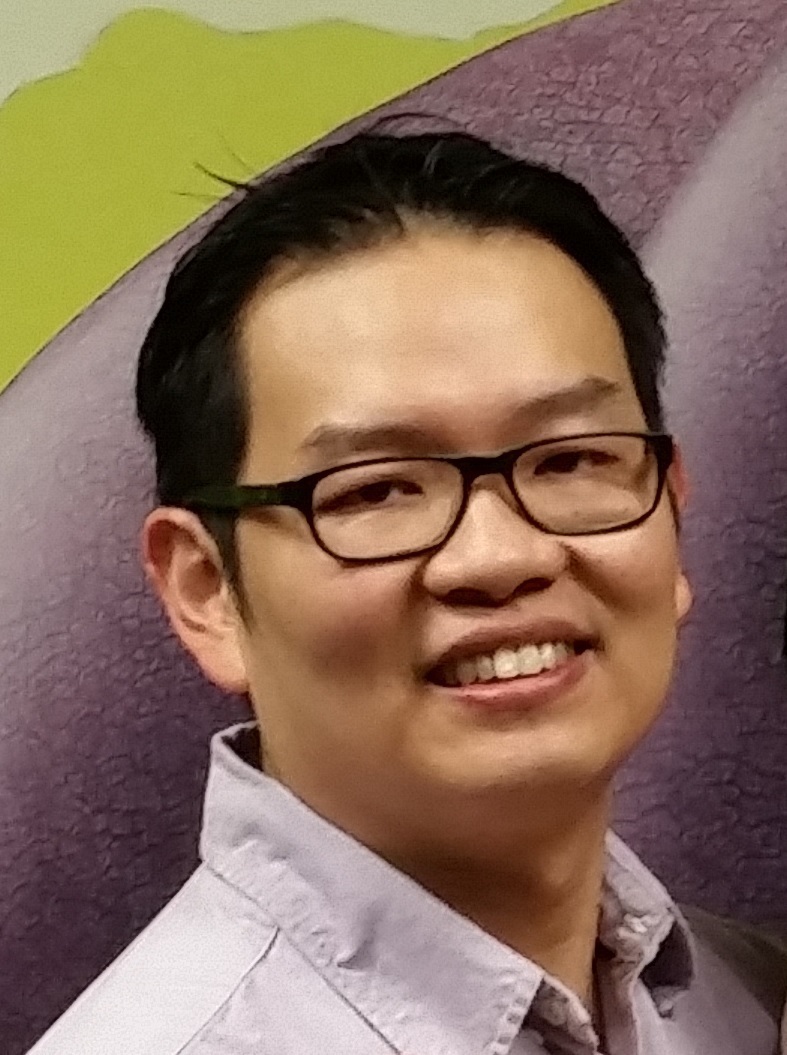}}]{Boon Loong Ng} received the Bachelor of Engineering (Electrical and Electronic) degree and the Ph.D. degree in Engineering from the University of Melbourne, Australia, in 2001 and in 2007, respectively. He currently holds the position of Senior Research Director with Samsung Research America – Standards \& Mobility Innovation (SMI) Lab in Plano, Texas. He contributed to 3GPP RAN L1/L2 standardizations of LTE, LTE-A, LTE-A Pro, and 5G NR technologies from the period of 2008 to 2018. He holds over 60 USPTO-granted patents on LTE/LTE-A/LTE-A Pro/5G and more than 100 patent applications globally. Since 2018, he has been leading an R\&D team that develops system and algorithm design solutions for commercial 5G and Wi-Fi technologies.
\end{IEEEbiography}

\begin{IEEEbiography}[{\includegraphics[width=1in,height=1.25in,clip,keepaspectratio]{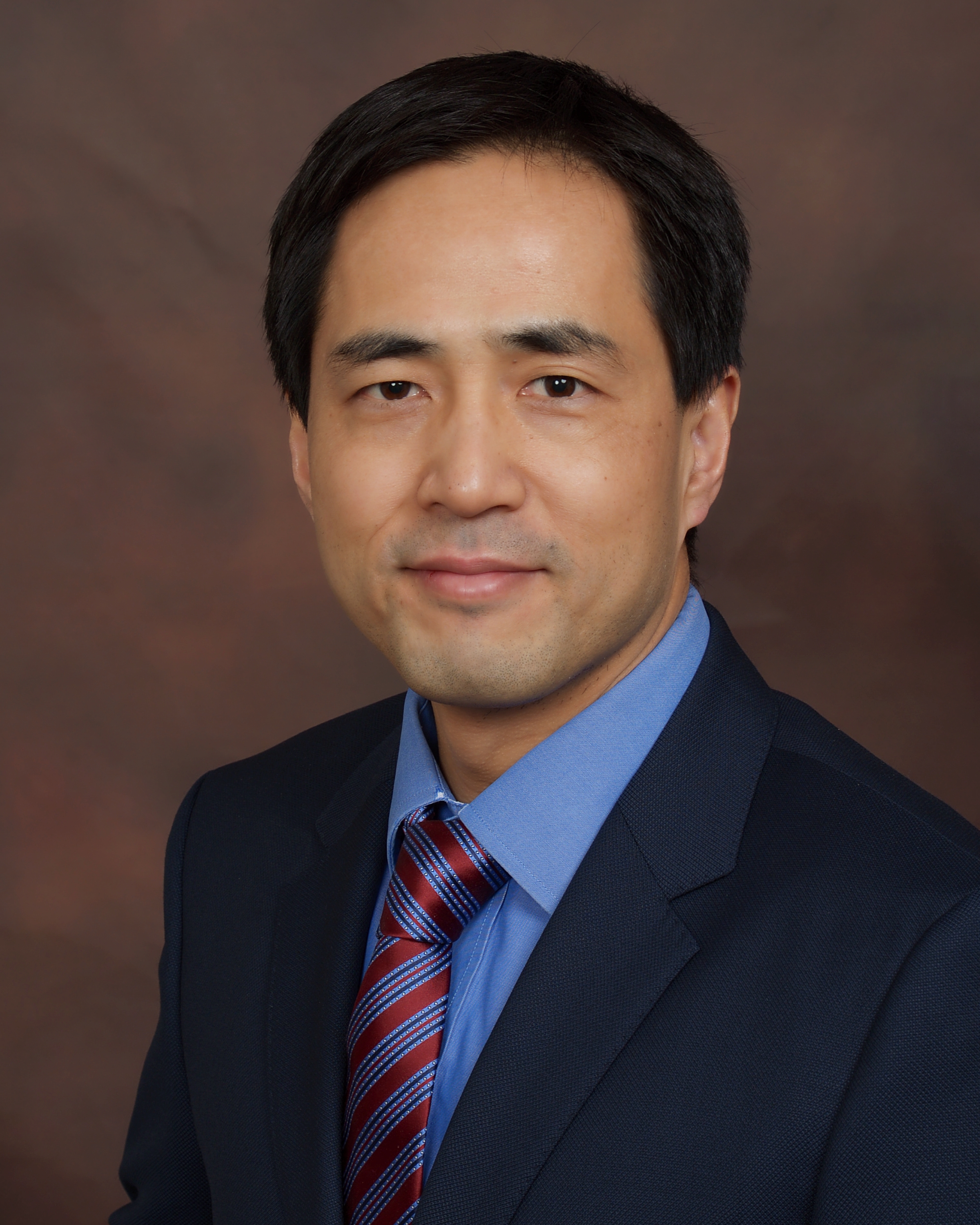}}]{Jianzhong (Charlie) Zhang} (S'00-M'03-SM'09-F'16) received the Ph.D. degree from the University of Wisconsin, Madison WI, USA. He is currently a Senior Vice President at Samsung Research America, where he leads research, prototyping, and standardization for 5G/6G and other wireless systems. He is also a Corporate VP and head of the  global 6G team at Samsung Research. He is currently serving as the ATIS North America Next-G Alliance Full Member Group Vice Chair. Previously, he was the Board Chair of the FiRa Consortium  from May 2019 to May 2023, and the Vice Chairman of the 3GPP RAN1 working group from 2009 to 2013, where he led development of LTE and LTE-Advanced technologies. He worked for Nokia Research Center and Motorola Mobility for 6 years before joining Samsung in 2007.  He received his Ph.D. degree from the University of Wisconsin, Madison.  Dr. Zhang is a Fellow of IEEE.
\end{IEEEbiography}

\end{document}